\documentclass{article}
\usepackage[T2A]{fontenc}
\usepackage[cp1251]{inputenc}
\usepackage[english,russian]{babel}
\usepackage[tbtags]{amsmath}
\usepackage{amsfonts,amssymb,mathrsfs,amscd}
\usepackage{array}
\usepackage[hyper]{msbmy}
\JournalName{}
\numberwithin{equation}{section}

\newcounter{myta}
\numberwithin{myta}{section}
\newcommand{\myt}{\refstepcounter{myta}\themyta}
\theoremstyle{plain}
\newtheorem{theorem}{Теорема}

\newtheorem{propos}{Предложение}
\theoremstyle{definition}
\newtheorem{definition}{Определение}
\newtheorem{proof}{Доказательство}
\newtheorem{remark}{Замечание}

\def\eps{\varepsilon}

\def\rk{\operatorname{rank}}
\def\rank{\operatorname{rank}}

\def\sgrad{\operatorname{sgrad}}
\def\Ker{\operatorname{Ker}}
\def\Im{\operatorname{Im}}
\def\sp{\operatorname{sp}}
\def\bR{\mathbb{R}}
\def\bZ{\mathbb{Z}}
\def\bs{\boldsymbol}
\newcommand {\ri} {\mathrm{i}}
\newcommand {\fts}[1] {{\small #1}}
\newcommand {\mstrut}{\vphantom{\bigl(}}
\newcommand {\mF}{\mathcal{F}}

\newcommand {\ds}{\displaystyle}
\newcommand {\mbf}[1] {\mathbf{#1}}
\newcommand {\mA} {\mathcal{A}}
\newcommand {\mP} {\mathcal{P}}
\newcommand {\mK} {\mathcal{C}}
\newcommand {\mM} {\mathcal{M}}
\newcommand {\mL} {\mathcal{L}}
\newcommand {\LS} {\mathcal{S}}
\newcommand {\mm} {\mathcal{M}_1}
\newcommand {\mn} {\mathcal{M}_2}
\newcommand {\mo} {\mathcal{M}_3}
\newcommand {\ml} {\mathcal{M}_4}

\newcommand {\pov} {\Gamma}
\newcommand {\ad}[1] {\mathop{\rm Adiag}\nolimits \{#1\}}
\begin{document}

\title{Классификация особенностей в задаче о движении волчка Ковалевской в двойном поле
сил}
\author[P.\,E.~Ryabov]{П.\,Е.~Рябов}
\address{Финансовый университет при Правительстве Российской Федерации}
\email{orelryabov@mail.ru}
\author[M.\,P.~Kharlamov]{М.\,П.~Харламов}
\address{Волгоградская академия государственной службы}
\email{mharlamov@vags.ru}

\date{22.06.2010}
\udk{517.938.5; 531.38}

\maketitle

\begin{fulltext}

\begin{abstract}
Рассматривается задача о движении волчка
Ковалевской в двойном силовом поле (случай
интегрируемости А.Г.\,Реймана --
М.А.\,Семенова-Тян\-Шанского без
гиростатического момента). Это вполне
интегрируемая гамильтонова система с тремя
степенями свободы, несводимая к семейству
систем с двумя степенями свободы. Изучается
критическое множество интегрального
отображения. Приводится описание критических
подсистем и бифуркационных диаграмм. Дана
классификация всех невырожденных критических
точек --- положений равновесия (невырожденных
особенностей ранга $0$), особых периодических
движений (невырожденных особенностей ранга
$1$), а также критических двухчастотных
движений (невырожденных особенностей ранга
$2$).

Библиография: 32 назв.
\end{abstract}

\begin{keywords}
Особенности интегрируемых гамильтоновых систем, отображение момента,
бифуркационная диаграмма
\end{keywords}

\markright{Особенности волчка Ковалевской в двойном поле сил}

\footnotetext[0]{Работа выполнена при поддержке РФФИ (грант
\No~10-01-00043).}

\footnotetext[0]{\textbf{Опубликовано:} Математический сборник, 2012, 203(2), 111--142}

\tableofcontents

\section{Введение}\label{s1}
Задача о движении волчка Ковалевской в двойном поле сил описывается
системой уравнений \cite{Bogo}
\begin{equation}\label{eq1.1}
\begin{array}{lll}
\displaystyle{2 \, \dot \omega_1=\omega_2\omega_3+\beta_3,}
&\displaystyle{ \dot \alpha_1=\alpha_2\omega_3-\alpha_3\omega_2,} &
\displaystyle{\dot\beta_1=\beta_2\omega_3-\omega_2\beta_3,}\\
\displaystyle{2 \, \dot \omega_2 = - \omega_1 \omega_3-\alpha_3,} &
\displaystyle{\dot \alpha_2=\omega_1\alpha_3-\omega_3\alpha_1,}&
\displaystyle{\dot\beta_2=\omega_1\beta_3-\omega_3\beta_1,}\\
\displaystyle{\dot\omega_3=\alpha_2-\beta_1,}& \displaystyle{\dot
\alpha_3=\alpha_1\omega_2-\alpha_2\omega_1,} & \displaystyle{\dot
\beta_3=\beta_1\omega_2-\beta_2\omega_1.}
\end{array}
\end{equation}
Здесь $\boldsymbol\omega$ -- вектор мгновенной угловой скорости.
Постоянные в инерциальном пространстве векторы $\boldsymbol \alpha,
\boldsymbol\beta$ характеризуют действие силовых полей. Обозначения
выберем так, чтобы выполнялось неравенство
$|\boldsymbol\alpha|\geqslant|\boldsymbol\beta|.$ Как показано в
\cite{Kh34}, без ограничения общности силовые поля можно считать
взаимно ортогональными. Тогда геометрические интегралы системы
(\ref{eq1.1}) запишутся в виде ($a \geqslant b \geqslant 0$)
\begin{eqnarray}\label{eq1.4}
|{\boldsymbol\alpha}|^2=a^2,\quad |{\boldsymbol\beta}|^2=b^2,\quad
{\boldsymbol\alpha}\cdot {\boldsymbol\beta}=0.
\end{eqnarray}

Используя компоненты кинетического момента
\begin{eqnarray*}\label{eq1.3}
M_1 =2 \omega_1, \quad M_2 =2 \omega_2, \quad M_3 = \omega_3,
\end{eqnarray*}
перенесем на пространство ${\bR}^9({\boldsymbol
\omega,\boldsymbol\alpha, \boldsymbol\beta})$ введенные в работе
\cite{Bogo} скобки Пуассона
\begin{eqnarray}\label{eq1.2}
\begin{array}{c}
\{M_i,M_j\}=\eps_{i j k} M_k, \quad \{M_i,\alpha_j\}=\eps_{i j
k} \alpha_k, \quad \{M_i,\beta_j\}=\eps _ {i j k} \beta_k, \\[1.5mm]
\{\alpha_i,\alpha_j\}=0, \quad \{\alpha_i,\beta_j\}=0, \quad
\{\beta_i,\beta_j\}=0.
\end{array}
\end{eqnarray}
Система (\ref{eq1.1}) примет вид $\dot x=\{x, H\}$, где $x$ -- любая
из координат, а
\begin{eqnarray*}\label{equa:H}
H=\frac{1}{2}(2\omega_1^2+2\omega_2^2+\omega_3^2)-\alpha_1-\beta_2.
\end{eqnarray*}
Функциями Казимира для скобок (\ref{eq1.2}) являются левые части
уравнений (\ref{eq1.4}). Поэтому векторное поле (\ref{eq1.1}),
ограниченное на заданное этими уравнениями шестимерное подмногообразие
$\mP^6$ в ${\bR}^9(\boldsymbol\omega,\boldsymbol\alpha,\boldsymbol\beta)$, является гамильтоновой системой с тремя
степенями свободы.

При $b=0$ система (\ref{eq1.1}) описывает случай С.В.\,Ковалевской
движения твердого тела в поле силы тяжести, а при $a=b$ --- cлучай Х.М.\,Яхья
\cite{Yeh}. Для этих предельных случаев задачи обладают
группой симметрий и редуцируются к семействам
интегрируемых систем с двумя степенями свободы (конфигурационное пространство -- сфера).
Классический случай Ковалевской изучен в
\cite{KhPMM}, \cite{KhDAN}, \cite{BolFomRich}.
Случай Яхья и его обобщения рассматривались в \cite{JPA}.
Глобальный подход к классификации интегрируемых систем на двумерной сфере, порожденных задачами динамики твердого тела с осесимметричным потенциалом, реализован в работах \cite{Fom1991},
\cite{BolFom1994}, \cite{BolKozFom1995}.

Далее предполагается, что
\begin{equation}\label{aneb}
a>b>0.
\end{equation}
Функции
\begin{equation*}\label{equa:GK}
\begin{array}{l}
K=(\omega_1^2-\omega_2^2+\alpha_1-\beta_2)^2+(2\omega_1\omega_2+\alpha_2+\beta_1)^2,\\
G=\left[\omega_1\alpha_1+\omega_2\alpha_2+\frac{1}{2} \alpha_3
\omega_3\right]^2 +
\left[ \omega_1 \beta_1 + \omega_2 \beta_2 + \frac{1}{2} \beta_3 \omega_3 \right]^2+\\
\phantom{G=}+\omega_3\left[(\alpha_2\beta_3-\alpha_3\beta_2)\omega_1+(\alpha_3\beta_1-\alpha_1\beta_3)\omega_2+
\frac{1}{2}(\alpha_1\beta_2-\alpha_2\beta_1)\omega_3\right]-\\
\phantom{K=}-\alpha_1b^2-\beta_2 a^2
\end{array}
\end{equation*}
вместе с $H$ образуют на $\mP^6$ полный
инволютивный набор интегралов системы
(\ref{eq1.1}). Интегралы $K$ и $G$ указаны в
\cite{Bogo} и \cite{BobReySem}.

Определим интегральное отображение
\begin{equation}\label{eqmom}
{\mF }: \mP^{6} \to {\bR}^3,
\end{equation}
полагая ${\mF }(x)=\bigl( g=G(x), k=K(x), h=H(x)\bigr)$. Отображение
${\mF }$ принято называть {\it отображением момента\/}.

Обозначим через $\mK$ совокупность всех критических точек
отображения момента, то есть точек, в которых $\rk d{\mF }(x)<3$.
Множество критических значений $\Sigma={\mF }(\mK) \subset{\bR}^3$
называется {\it бифуркационной диаграммой}. Множество $\mK$ можно
стратифицировать рангом отображения момента, представив в виде
объединения $\mK =\mK^0 \cup \mK^1 \cup \mK^2$. Здесь $ \mK^r= \{x
\in \mP^6 | \rank d{\mF}(x)=r \}$. В соответствии с этим и диаграмма
$\Sigma$ становится клеточным комплексом $\Sigma =\Sigma^0 \cup
\Sigma^1 \cup \Sigma^2$. С другой стороны, на практике
бифуркационные диаграммы описываются в терминах некоторых
поверхностей в пространстве констант первых интегралов. Уравнения
этих поверхностей (неявные или параметрические) зачастую можно
получить даже не вычисляя самих критических точек, как
дискриминантные множества некоторых многочленов (например, исходя из
особенностей алгебраических кривых, ассоциированных с
представлениями Лакса). Такие поверхности будем обозначать через
$\pov _i$ и записывать представление
\begin{equation*}\label{strsig}
\Sigma= \bigcup_i \Sigma_i, \qquad \Sigma_i = \Sigma \cap \pov _i.
\end{equation*}
При этом, в свою очередь, каждое множество $\Sigma_i$
стратифицировано
\begin{equation*}\label{strsigi}
\Sigma_i = \Sigma_i^0 \cup \Sigma_i^1 \cup \Sigma_i^2
\end{equation*}
и различные множества $\Sigma_i$ могут пересекаться между собой по
стратам размерности меньше~2. Смысл такого представления в том, что
критическое множество $\mK$ оказывается объединением естественным
образом возникающих инвариантных множеств $\mM_i=\mK \cap
\mF^{-1}(\pov_i)$. Если поверхность $\pov_i$ записана регулярным
уравнением
\begin{equation}\label{eqpov}
\phi_i(g,k,h) = 0,
\end{equation}
то $\mM_i$ определится как множество критических точек интеграла
$\phi_i(G,K,H)$, лежащих на его нулевом уровне, а вычисленные в
точке $\mM_i$ компоненты градиента функции $\phi_i$ в подстановке
значений интегралов $G,K,H$ дадут коэффициенты равной нулю линейной
комбинации дифференциалов $dG,dK,dH$. В точке трансверсального
пересечения двух поверхностей $\pov_i$ и $\pov_j$ получим две
независимые равные нулю комбинации, поэтому в точках
соответствующего пересечения $\mM_i \cap \mM_j$ ранг $\mF$ равен~1.
Очевидно, что точки трансверсального пересечения трех поверхностей
(углы бифуркационной диаграммы) оказываются порожденными точками с
условием $\rank \mF =0$. Множества $\mM_i$ с индуцированной на них
динамикой далее называем критическими подсистемами.

Критические подсистемы и уравнения поверхностей $\pov_i$ в
рассматриваемой задаче найдены в работе \cite{Kh2005}. Подробное
описание стратификации критического множества по рангу отображения
момента изложено в \cite{Kh36}. Там же в виде явных неравенств
указаны области существования движений на поверхностях $\pov_i$ --
множества $\Sigma_i$, составляющие бифуркационную диаграмму. Как
следствие построен атлас всех сечений диаграммы $\Sigma$ плоскостями
постоянной энергии, то есть найдены все бифуркационные диаграммы
$\Sigma(h)$ отображения $G{\times}K$, ограниченного на
изоэнергетические поверхности $\{H=h\} \subset \mP^6$. Краткая
сводка этих результатов в необходимом объеме приведена ниже.

Критические подсистемы оказываются интегрируемыми почти всюду гамильтоновыми системами с числом степеней свободы меньшим трех. Для них, в свою очередь, определено индуцированное отображение момента. Бифуркационная диаграмма $\Sigma_i^*$ для отображения $\mF|_{\mM_i}$ естественным образом отождествляется с объединением $\Sigma_i^0 \cup \Sigma_i^1$. Описание критических множеств, диаграмм $\Sigma_i^*$ и бифуркаций {\it внутри} критических подсистем получено в работах \cite{Zot}, \cite{KhSav}, \cite{Kh2009}. Однако классификация точек множества $\mK$ по отношению ко всей системе с тремя степенями свободы на $\mP^6$ не проводилась. В данной работе исследуется тип невырожденных особенностей полного отображения момента.

\section{Понятие невырожденной особенности}\label{s2}
Напомним некоторые определения и факты, связанные с особенностями отображения момента и бифуркациями в случае многих степеней свободы \cite{Fom1986}, \cite{Fom1989},
\cite{BolFom}, \cite{BolsOsh2006}. Пусть
\begin{equation}
\label{eq2.1} (\mP^{2k},\Omega, H, {\mF } )
\end{equation}
-- интегрируемая по Лиувиллю гамильтонова система. Здесь
$(\mP^{2k},\Omega)$ -- симплектическое многообразие, динамика задана
полем $v=\sgrad H$, и $f_1,\ldots,f_k$ -- независимые (почти всюду)
интегралы в инволюции, составляющие отображение момента
\begin{equation*}
{\mF }: \mP^{2k} \to {\bR}^k, \qquad {\mF}(x)=(f_1(x),\ldots,f_k(x))
\end{equation*}
с множеством критических точек $\mK =\mK^0 \cup \mK^1 \cup \ldots
\cup \mK^{k-1}$, где
$$
\mK^r=\{x\in \mP^{2k}| \rank d{\mF }(x)=r\}.
$$

Для любой точки $x\in\mK^r$  можно подобрать невырожденную линейную
замену функций $(f_1,\ldots,f_k) \mapsto (g_1,\ldots,g_k)$, такую,
что

1) ${dg_1(x)=\ldots=dg_{k-r}(x)=0}$;

2) дифференциалы $dg_{k-r+1}(x),\ldots,dg_k(x)$ линейно независимы.

Пусть $\varphi$ -- гладкая функция на $\mP^{2k}$, для которой точка
$x$ критическая. Линеаризация $A_\varphi(x)$ векторного поля $\sgrad
\varphi$ в точке $x$ является симплектическим оператором в
касательном пространстве $T_x\mP^{2k}$. Алгебру таких операторов
отождествим с $\sp(2k,{\bR})$. Там, где это не приведет к
недоразумению, обозначение точки в операторе вида $A_\varphi$
опускаем.

Пусть $x\in \mK^r$, $j \in \{1,\ldots,k-r\}$. Обозначим $v_j=\sgrad
g_j$. Рассмотрим в $T_x\mP^{2k}$ подпространство $W$, натянутое на
$v_{k-r+1}$,\ldots, $v_k$, и его косоортогональное дополнение
$W^\prime$. Тогда ${W \subset \Ker A_{g_j}}$ и ${\Im A_{g_j} \subset
W^\prime}$. На факторпространстве $W^\prime/W$ имеется
симплектическая структура $\tilde\Omega$, а операторы $\tilde
A_{g_j}=A_{g_j}|_{W^\prime/W}$ являются элементами алгебры Ли
${\sp(2(k-r),{\bR})}$. Таким образом, в вещественной симплектической
алгебре Ли ${\sp(2(k-r),{\bR})}$ можно рассмотреть коммутативную
подалгебру $\mA(x,{\mF })$, порожденную операторами $\tilde
A_{g_1},\ldots,\tilde A_{g_{k-r}}$.

\begin{definition}
Критическая точка $x\in \mK^r$ отображения момента $\mF $ называется
\textit{невырожденной ранга} $r$ (\textit{коранга} $k-r$), если
$\mA(x,{\mF })$ -- подалгебра Картана в $\sp(2(k-r),{\bR})$.
\end{definition}

Коммутативная подалгебра в $\sp(2(k-r),{\bR})$ является картановской
тогда и только тогда, когда ее размерность равна $k-r$ и среди ее
элементов найдется линейный оператор с различными собственными
значениями (он называется регулярным оператором). Заметим, что
касательное пространство $T_{x}\mP^{2k}$ раскладывается на два
подпространства: $T_{x}\mP^{2k}=V_1\oplus V_2$, где $V_1$ --
корневое подпространство, которое соответствует собственному
значению $0$ алгебраической кратности $2r$, а $V_2$ изоморфно
$W^\prime/W$. Характеристическое уравнение регулярного оператора
$A_{g_j}$ (если такой найдется), ограниченного на подпространство
$V_2 \cong W^\prime/W$, имеет такой же набор $2(k-r)$ различных
собственных значений, что и регулярный элемент из картановской
подалгебры $\mA(x,{\mF })$ в $\sp(2(k-r),{\bR})$.

Для невырожденных критических точек определяется модельное слоение
Лиувилля: в пространстве ${\bR}^{2k}$ с координатами
$q_1,\ldots,q_k$, $p_1,\ldots,p_k$ и стандартной симплектической
структурой задается набор функций
\begin{equation}\label{pfuns}
\begin{array}{l}
P_i^{ell}=p_i^2+q_i^2 \quad (i=1,\ldots,m_1);\\
P_j^{hyp}=p_j q_j \quad (j=m_1+1,\ldots, m_1+m_2);\\
P_l^{foc}=p_lq_l+p_{l+1}q_{l+1}, \quad P_{l+1}^{foc}=p_lq_{l+1}-p_{l+1}q_l\\
\quad (l=m_1+m_2+1,m_1+m_2+3,\ldots,m_1+m_2+2m_3-1);\\
P_s=q_s \quad  (s=m_1+m_2+2m_3+1,\ldots,k).
\end{array}
\end{equation}
Их совместные уровни
определяют \textit{каноническое лиувиллево
слоение ${\LS}_{can}$} в ${\bR}^{2k}$  в
окрестности точки $0$, которая является
невырожденной особой точкой отображения момента
${\mF }_{can}:{\bR}^{2k}\to{\bR}^{k}$,
порожденного функциями (\ref{pfuns}). Ранг этой
невырожденной особенности равен $r$, а тройка
целых чисел $(m_1,m_2,m_3)$ называется ее типом
\cite{BolFom}. Таким образом, любая
невырожденная особенность отображения момента
${\mF }_{can}$ кратко характеризуется четверкой
целых чисел $(r, m_1,m_2,m_3)$, где $r$ -- ранг
отображения в особой точке, $(m_1,m_2,m_3)$ --
тип точки.  При этом $m_1+m_2+2m_3+r=k$, где
$k$ -- число степеней свободы рассматриваемой
системы.

Следующая теорема \cite{BolFom} позволяет понять, как выглядит
бифуркационная диаграмма в окрестности невырожденной особой точки.
Она будет диффеоморфна бифуркационной диаграмме ${\Sigma}_{can}$
модельного отображения момента ${\mF }_{can}$.

\begin{theorem}
Пусть дана вещественно-аналитическая интегрируемая система с $k$
степенями свободы на вещественно-аналитическом многообразии
$\mP^{2k}$. Тогда слоение Лиувилля в окрестности невырожденной
особой точки ранга $r$ и типа $(m_1,m_2,m_3)$ всегда локально
симплектоморфно модельному слоению Лиувилля ${\LS}_{can}$ с теми же
параметрами. В частности, два слоения Лиувилля с совпадающими
параметрами $(r,m_1,m_2,m_3)$ локально симплектоморфны.
\end{theorem}

Как показано в работах \cite{MirandaZung2004}, \cite{Zung2006},
утверждение теоремы остается верным, если невырожденную особую точку
в формулировке теоремы заменить на \textit{компактную невырожденную
особую орбиту} пуассоновского действия, ассоциированного с системой
(\ref{eq2.1}). Такие орбиты могут быть организованы в критические
подмногообразия.
\begin{definition}
Множество ${\cal M}\subset \mP^{2k}$ называется {\it критическим
подмногообразием} ранга $m<k$ (коранга $k-m$), если

$1)$ ${\cal M}$ -- гладкое подмногообразие
в $\mP^{2k}$ размерности $2m$;

$2)$ $\Omega_{\cal M}=\Omega\bigr|_{\cal
M}$ -- есть симплектическая структура на ${\cal M}$;

$3)$ ${\cal M}$ инвариантно относительно
векторного поля $v=\sgrad H$;

$4)$ $\rk {\mF }=m$ почти всюду на ${\cal M}$.
\end{definition}
\begin{propos}
Если ${\cal M}$ -- критическое подмногообразие ранга $m$, то
$$
({\cal M},\Omega\bigr|_{\cal M}, H\bigr|_{\cal M})
$$
-- вполне интегрируемая гамильтонова система с $m$ степенями свободы.
\end{propos}

Описанные в предыдущем разделе критические
подсистемы оказываются критическим
многообразиями или их замыканиями. В последнем
случае они являются критическими многообразиями
почти всюду. Кроме нарушений гладкости фазовых
пространств критических подсистем могут
возникать особенности, связанные с вырождением
индуцированной симплектической структуры. Это
явление было впервые обнаружено Д.Б.\,Зотьевым
\cite{Zot}.

Следующее утверждение позволяет установить глобальный тип
критической точки, принадлежащей двум критическим подсистемам, зная
ее типы относительно каждой. Пусть ${\cal M}$ и ${\cal N}$ --
критические подмногообразия в $\mP^{2k}$ ранга $m=k-1$, а ${\cal
M}\cap{\cal N}={\cal L}$ -- критическое подмногообразие ранга
$r=k-2$ и пусть пересечение ${\cal M}\cap{\cal N}$ трансверсально.

\begin{propos}\label{propm}
Пусть $x_0\in {\cal L}$ -- невырожденная критическая точка системы
$(\ref{eq2.1})$ ранга $k-2$, имеющая тип ${(\tau_1,\tau_2,0)}$, где
$\tau_1+\tau_2=2$. Тогда

$1)$ для подсистемы на ${\cal L}$ эта точка регулярна;

$2)$ для подсистем на ${\cal M}$ и на ${\cal N}$ эта точка критическая коранга $1$, причем в подсистемах
${\cal M}$ и ${\cal N}$ ее тип имеет вид ${(m_1,m_2,0)}$ и ${(n_1,n_2,0)}$ соответственно, а числа $m_i, n_i$ связаны
соотношениями $\tau_1=m_1+n_1$, $\tau_2=m_2+n_2$, $m_1+m_2=1$,$n_1+n_2=1$.
\end{propos}
\begin{proof} По предположению имеем
\begin{eqnarray*}
&& T_{x_0}\mP^{2k}=T_{x_0}{\cal M}+T_{x_0}{\cal N},\quad
T_{x_0}{\cal M}\cap T_{x_0}{\cal N}=T_{x_0}{\cal L},\\
&& \dim T_{x_0}{\cal M}=\dim T_{x_0}{\cal N}=2(k-1),\quad \dim
T_{x_0}{\cal L}=2(k-2),
\end{eqnarray*}
и все указанные пространства инвариантны для операторов из
$\sp(2(k-2),{\bR})$, порождающих соответствующую подалгебру
Картана.\end{proof}

 Перейдем к описанию критических точек
обобщенного волчка Ковалевской.

\section{Критические подсистемы и бифуркационные диаграммы}\label{s3}
В этом разделе излагается сводка результатов
\cite{Kh2005,Kh36, Kh2004, Kh2006}, относящихся
к нахождению критического множества отображения
момента (\ref{eqmom}) и бифуркационных
диаграмм. Приводится система единых
обозначений, необходимая для трехмерной
классификации.

Для описания критических подсистем удобно воспользоваться заменой
переменных, введенной в работе \cite{Kh32}, обобщающей замену
С.В.\,Ковалевской и подсказанной представлением Лакса
\cite{BobReySem}:
\begin{equation*}\label{eq1.5}
\begin{array}{l}
\begin{array}{ll}
x_1 = (\alpha_1  - \beta_2) + \ri (\alpha_2  + \beta_1),&
x_2 = (\alpha_1  - \beta_2) - \ri (\alpha_2  + \beta_1 ), \\
y_1 = (\alpha_1  + \beta_2) + \ri (\alpha_2  - \beta_1), & y_2 =
(\alpha_1  + \beta_2) -
\ri (\alpha_2  - \beta_1), \\
 z_1 = \alpha_3  + \ri \beta_3, &
z_2 = \alpha_3  - \ri \beta_3,
\end{array}\\
\begin{array}{lll}
w_1 = \omega_1  + \ri \omega_2 , & w_2 = \omega_1  - \ri \omega_2, &
w_3 = \omega_3.
\end{array}
\end{array}
\end{equation*}
Она приводит интегралы к виду
\begin{equation*}\label{eq1.6}
\begin{array}{l}
\displaystyle{ H = \frac{1}{2}w_3^2 + w_1 w_2 - \frac{1}{2}(y_1 +
y_2 )
},  \\
\displaystyle{
K=(w_1^2 + x_1 )(w_2^2  + x_2 )},  \\
\displaystyle{ G = \frac{1}{4}(p^2  - x_1 x_2 )w_3^2
+\frac{1}{2}(x_2 z_1 w_1 +x_1 z_2 w_2 )w_3  +}  \\
\displaystyle{\qquad{} + \frac{1}{4}(x_2 w_1  + y_1 w_2 )(y_2 w_1 +
x_1 w_2 ) - \frac{1}{4}p^2 (y_1  + y_2 ) +\frac{1} {4}r^2 (x_1 + x_2
)}.
\end{array}
\end{equation*}
Здесь введены параметры $p>r>0$
\begin{equation*}\label{ne1_25}
\begin{array}{c}
p^2  = a^2  + b^2 ,\;r^2  = a^2  - b^2,
\end{array}
\end{equation*}
с использованием которых уравнения геометрических интегралов
(\ref{eq1.4}) запишутся так:
\begin{equation*}\label{ne1_26}
\begin{array}{c}
z_1^2  + x_1 y_2  = r^2 ,\quad z_2^2  + x_2 y_1  = r^2 , \\
x_1 x_2  + y_1 y_2  + 2z_1 z_2  = 2p^2 .
\end{array}
\end{equation*}

Критическое множество отображения (\ref{eqmom}) состоит из четырех
критических подсистем. За вычетом множества особенностей нулевой
меры три из них являются критическими многообразиями ранга 2, и одно
-- критическим многообразием ранга 1. Отклоняясь от формальной
строгости, будем говорить, что первые три -- это подсистемы с двумя
степенями свободы, а последняя -- подсистема с одной степенью
свободы.

Первая критическая подсистема $\mm \subset \mP ^6$ была найдена в
работе \cite{Bogo} еще до открытия факта интегрируемости системы
(\ref{eq1.1}) в целом. Она задана соотношениями
\begin{equation}
\label{eq1.7} Z_1  = 0,\qquad Z_2  = 0,
\end{equation}
где
\begin{equation*}\label{ne1_28}
Z_1=w_1^2+x_1, \qquad Z_2=w_2^2+x_2.
\end{equation*}
Соответствующая часть $\Sigma_1$ бифуркационной диаграммы лежит в
плоскости $k = 0$, а независимыми интегралами на $\mm$ можно выбрать
$H$ и частный интеграл Богоявленского
\begin{equation*}\label{fbogo}
F = w_1 w_2 w_3+z_2 w_1+z_1 w_2.
\end{equation*}
На $\mm$ постоянные $g,k$ интегралов $G,K$ выражаются через
постоянные $h,f$ принятых за независимые интегралов $H,F$
соотношениями, найденными в \cite{Zot}.  Их можно принять за
параметрические уравнения несущей поверхности
\begin{equation}\label{gbogo}
\pov_1: \left\{ \begin{array}{l}
k=0, \\
g = \ds{\frac{1}{2}}p^2 h -\ds{\frac{1}{4}} f^2.
\end{array}\right.
\end{equation}

Условимся о некоторых упрощениях обозначений. Рассмотрим на $\mm$
"индуцированное"\ отображение момента
\begin{equation*}
\mF_1 = H{\times}F^2: \mm \to \bR ^ 2.
\end{equation*}
Квадрат интеграла $F$ взят здесь, чтобы получить однозначное (хотя
бы почти всюду) соответствие точек допустимых областей поверхности и
плоскости  в силу уравнений (\ref{gbogo}). Пользуясь этими
уравнениями не будем различать $\mF|_{\mm}$ и $\mF_1$. Множество
$\Sigma_1$ отобразится на плоскости $(h,f^2)$ как $\Im \mF_1$, а
бифуркационная диаграмма отображения $\mF_1$ отобразится на
поверхности $\pov_1$ как определенный выше остов $\Sigma_1^*$
клеточного комплекса $\Sigma_1$. В связи с этим для этой и
последующих подсистем обозначаем одинаково множества $\Sigma_i$,
$\Sigma_i^*$ и их образы на плоскостях констант интегралов,
выбранных на $\mathcal{M}_i$ в качестве независимых.

Вторая критическая подсистема $\mn  \subset \mP ^6$ найдена в
\cite{Kh32} и может быть определена как замыкание множества решений
системы уравнений
\begin{equation}\label{eq1.8}
F_1  = 0, \qquad F_2  = 0,
\end{equation}
где
\begin{equation*}\label{ne1_32}
\begin{array}{l}
F_1  = \sqrt{x_1 x_2} w_3  - \ds{\frac{(x_2 z_1 w_1  + x_1 z_2
w_2)}{\sqrt{x_1 x_2}}},\qquad \displaystyle{F_2
=\frac{x_2}{x_1}Z_1-\frac{x_1}{x_2}Z_2}.
\end{array}
\end{equation*}
Соответствующая часть $\Sigma_2$ бифуркационной диаграммы лежит на
поверхности
\begin{equation}\label{eqpov2}
(p^2h-2g)^2-r^4k=0.
\end{equation}
В качестве независимых интегралов можно взять $H$ и частный интеграл
\begin{equation*}\label{eq1.9}
\displaystyle{M =
\frac{1}{2r^2}(\frac{x_2}{x_1}Z_1+\frac{x_1}{x_2}Z_2)},
\end{equation*}
так что отображение момента для этой системы
\begin{equation*}\label{fharl1}
\mF_2 = M{\times}H: \mn \to \bR ^ 2.
\end{equation*}
При этом получаем уравнения несущей поверхности
\begin{equation*}\label{gharl}
\pov_2: \left\{ \begin{array}{l} k=r^4 m^2, \\
g = \ds{\frac{1}{2}} (p^2 h - r^4 m).
\end{array}\right.
\end{equation*}

Третья критическая подсистема $\mo  \subset \mP ^6$ найдена в
\cite{Kh34} и может быть определена как замыкание множества решений
системы уравнений
\begin{equation}\label{eq1.10}
R_1  = 0,\qquad R_2  = 0,
\end{equation}
где
\begin{equation*}\label{ne1_36}
\begin{split}
R_1 & =\displaystyle{\frac{w_2 x_1+w_1 y_2+w_3 z_1}{w_1}-\frac{w_1
x_2+w_2 y_1+w_3
z_2}{w_2},} \\
R_2 & = \displaystyle{(w_2 z_1+w_1 z_2)w_3^2+\Bigl[\frac{w_2
z_1^2}{w_1}+\frac{w_1 z_2^2}{w_2}+w_1 w_2(y_1+y_2)+}\\
& \phantom{=} + x_1 w_2^2+x_2 w_1^2\Bigr]w_3 +\frac{w_2^2 x_1
z_1}{w_1} + \frac{w_1^2 x_2
z_2}{w_2}+\\
& \phantom{=} + \displaystyle{ x_1 z_2 w_2+ x_2 z_1 w_1 +(w_1
z_2-w_2 z_1)(y_1-y_2).}
\end{split}
\end{equation*}
Соответствующая часть $\Sigma_3$ бифуркационной диаграммы лежит на
дискриминантной поверхности многочлена, обобщающего второй многочлен
Ковалевской:
\begin{equation*}\label{eq2_15}
\varphi(s) = s^4 - 2h s^3 +(h^2 + p^2 - k)s^2  - 2g s + a^2 b^2.
\end{equation*}
При этом $s$ есть константа частного интеграла
\begin{equation}\label{eq1.11}
\begin{array}{l}
\displaystyle{S=-\frac{1}{4} \big( \frac {y_2 w_1+x_1 w_2+z_1
w_3}{w_1}+\frac{x_2 w_1+y_1 w_2+z_2 w_3}{w_2} \big),}
\end{array}
\end{equation}
который в дополнение к $H$ может быть взят в качестве независимого
для определения отображения момента
\begin{equation*}\label{fharl3}
\mF_3 = S{\times}H: \mo \to \bR ^ 2.
\end{equation*}
Тогда
\begin{equation}\label{ghar2}
\pov_3: \left\{ \begin{array}{l}
\displaystyle{k = 3 s^2 - 4 h s + p^2 + h^2 - \frac{a^2 b^2} {s^2}}, \\
\displaystyle{g = -s^3 + h s^2 + \frac{a^2 b^2}{s}}.
\end{array}\right.
\end{equation}

При построении явных решений системы $\mo$ оказалось удобным
рассмотреть вместо пары интегралов $(S,H)$ пару $(S,T)$, где $T$ так
же, как и $S$, есть частный интеграл на $\mo$:
\begin{equation}\label{intT}
\begin{array}{l}
\displaystyle{T=\frac{1}{2}[w_1(x_2 w_1+y_1 w_2+z_2 w_3)+w_2(y_2
w_1+x_1 w_2+z_1 w_3)]+x_1 x_2+z_1 z_2}.
\end{array}
\end{equation}
Обозначая его постоянную через $\tau$, представим параметрические
уравнения (\ref{ghar2}) поверхности $\pov_3$ в виде
\begin{equation}\label{ne1_38}
\pov_3: \left\{ \begin{array}{l} \ds{h = \frac {p^2 -\tau}{2s}+ s}, \\
\ds{k= \frac{\tau^2 - 2p^2\tau + r^4}{4s^2}+\tau},\\
\ds{g = \frac{p^4-r^4}{4s}+ \frac{1}{2}(p^2 -\tau)s}.
\end{array}\right.
\end{equation}

Несмотря на то что уравнения (\ref{eq1.10}) имеют очевидную
особенность, замыкание множества их решений, будучи инвариантным
относительно фазового потока (\ref{eq1.1}), содержит в себе и часть
траекторий, на которых
\begin{equation*}\label{pend3}
w_1  \equiv 0,\qquad w_2  \equiv 0.
\end{equation*}
В целом семейство траекторий, удовлетворяющих этому условию,
образует четвертую критическую подсистему $\ml$, фазовое
пространство которой задано в $\mP^6$ системой уравнений
\begin{equation*}\label{crit4}
w_1 =0,\quad w_2=0, \quad z_1=0,\quad z_2=0.
\end{equation*}
Очевидно, $\ml$ -- двумерное гладкое инвариантное многообразие, $H$
выступает как единственный независимый интеграл, и, соответственно,
его константа есть единственный параметр на поверхности, несущей
соответствующую часть $\Sigma_4$ бифуркационной диаграммы. Это пара
прямых
\begin{equation*}\label{sigma4}
\pov_4: \left\{ \begin{array}{l} k=(a \mp b)^2,\\
g=\pm a b h.
\end{array}\right.
\end{equation*}
Ввиду того, что множества $\mo, \ml$ имеют непустое пересечение,
фактически $\pov_4$ добавляет в бифуркационную диаграмму $\Sigma$
лишь сегмент
\begin{equation*}\label{seg4}
g=  a b h, \quad k=(a-b)^2, \quad h^2 < 4 a b.
\end{equation*}
В остальном, как легко видеть, точки $\pov_4$ являются точками
самопересечения поверхности $\pov_3$.

В работе \cite{Kh2005} доказано, что
\begin{equation*}
\mK = \bigcup_{i=1}^4 \mathcal{M}_i,
\end{equation*}
и приведены бифуркационные диаграммы некоторых отображений, составленных из частных интегралов. Полное описание множеств $\Sigma_i$, основанное на стратификации критического множества рангом
отображения момента, дано в \cite{Kh36}. Коротко перечислим основные факты.

Обозначим через $\mbf{e}_i$ $(i=1,2,3)$ канонический базис в
$\bR^3$. Множество $\mK^0$ состоит из неподвижных точек
системы~(\ref{eq1.1})
\begin{equation}\label{immov}
\begin{array}{llll}
c_0:& {\boldsymbol\omega}={\boldsymbol 0}, &  {\bs \alpha} = \phantom{-}a \, \mbf{e}_1, & {\bs \beta}=\phantom{-}b \,\mbf{e}_2,\\
c_1:& {\boldsymbol\omega}={\boldsymbol 0}, &  {\bs \alpha} = \phantom{-}a \,\mbf{e}_1, & {\bs \beta}=-b \,\mbf{e}_2,\\
c_2:& {\boldsymbol\omega}={\boldsymbol 0}, &  {\bs \alpha} = -a \,\mbf{e}_1, & {\bs \beta}=\phantom{-}b \,\mbf{e}_2,\\
c_3:& {\boldsymbol\omega}={\boldsymbol 0}, &  {\bs \alpha} = - a\,
\mbf{e}_1, & {\bs \beta}=- b \,\mbf{e}_2.
\end{array}
\end{equation}
Соответствующие значения первых интегралов при фиксированных
постоянных $a,b$ дают четыре точки в $\bR^3$:
\begin{equation*}
\label{eq:36}
\begin{array}{lll}
P_0:\quad
\displaystyle{g=-ab(a+b),} & \displaystyle{k=(a-b)^2,} & \displaystyle{h=-(a+b);}\\[1.5mm]
P_1:\quad
\displaystyle{g=ab(a-b),} & \displaystyle{k=(a+b)^2,} & \displaystyle{h=-(a-b);}\\[1.5mm]
P_2:\quad
\displaystyle{g=-ab(a-b),} & \displaystyle{k=(a+b)^2,} & \displaystyle{h=a-b;}\\[1.5mm]
P_3:\quad \displaystyle{g=ab(a+b), } &
\displaystyle{k=(a-b)^2,} &
\displaystyle{h=a+b.}
\end{array}
\end{equation*}
В таблице~\ref{tab31} приводится соответствие между $P_k$ и точками
множеств $\Sigma_2$ и $\Sigma_3$ -- узловыми точками бифуркационных
диаграмм $\Sigma_2^*$ и $\Sigma_3^*$ в терминах выбранных там
констант интегралов. На множество $\Sigma_1$, как отмечено в
\cite{Zot}, эти точки в неприводимом случае (\ref{aneb}) не
попадают. На множестве $\Sigma_4$ их расположение очевидно.

{\renewcommand{\arraystretch}{1.5} \setlength{\extrarowheight}{-2pt}
\begin{table}[!htbp]
\centering
\begin{tabular}{|c|c|c|l|}
\multicolumn{4}{r}{\fts{Таблица
\myt\label{tab31}}}\\
 \hline $\mK^0$ &
\begin{tabular}{c}
Образ в\\
 ${\bR}^3(h,k,g)$
 \end{tabular}&
  \begin{tabular}{c}
Образ в ${\bR}^2(m,h)$
 \end{tabular}  &
\multicolumn{1}{c|}{\begin{tabular}{c} Образ в ${\bR}^2(s,h)$
 \end{tabular}} \\
\hline $c_0$  &$P_0$
&$P_{01}\left(\ds{\frac{1}{a+b}},-(a+b)\right)$&
\begin{tabular}{l}
$Q_{01} (-a,-(a+b))$\\
$Q_{02}(-b,-(a+b))$
\end{tabular}\\
\hline $c_1$  &$P_1$
&$P_{11}\left(\ds{\frac{1}{a-b}},-(a-b)\right)$&
\begin{tabular}{l}
$Q_{11}(-a,-(a-b))$\\
$Q_{12}(b,-(a-b))$
\end{tabular}\\
\hline $c_2$ &$P_2$ &$P_{21}\left(-\ds{\frac{1}{a-b}},\,
a-b\right)$&
\begin{tabular}{l}
$Q_{21}(-b,a-b)$\\
$Q_{22}(a,a-b)$
\end{tabular}\\
\hline $c_3$ &$P_3$ &$P_{31}\left(-\ds{\frac{1}{a+b}},\, a+b\right)$
&
\begin{tabular}{l}
$Q_{31}(b,a+b)$\\
$Q_{32}(a,a+b)$
\end{tabular}\\
\hline
\end{tabular}\,
\end{table}
}

В составе множества $\mK^1$ имеются следующие шесть семейств
маятниковых движений \cite{Kh34} (первый индекс соответствует
верхнему знаку):
\begin{equation*}
\begin{array}{l}
{\mL}_{1,2}=\{{\boldsymbol \alpha } \equiv \pm a{\bf e}_1, \;
{\boldsymbol \beta } = b({\bf e}_2 \cos \theta - {\bf e}_3 \sin
\theta ), \; {\boldsymbol \omega } = \theta ^ {\boldsymbol \cdot}
{\bf e}_1 , \; 2\theta ^{ {\boldsymbol \cdot}  {\boldsymbol \cdot} }
=  - b\sin \theta\},\\
{\mL}_{3,4}=\{{\boldsymbol \alpha } = a({\bf e}_1 \cos \theta + {\bf
e}_3 \sin \theta ), \; {\boldsymbol \beta } \equiv  \pm b{\bf e}_2 ,
\; {\boldsymbol \omega } = \theta ^ {\boldsymbol \cdot}  {\bf e}_2 ,
\; 2\theta ^{ {\boldsymbol \cdot}  {\boldsymbol \cdot} }  = -
a\sin \theta\}, \\
{\mL}_{5,6}=\{{\boldsymbol{\alpha }} = a({\bf{e}}_1 \cos \theta -
{\bf{e}}_2 \sin \theta ),\;{\boldsymbol{\beta }} =  \pm b({\bf{e}}_1
\sin \theta  + {\bf{e}}_2 \cos \theta ), \; {\boldsymbol{\omega }} =
\theta ^ {\boldsymbol \cdot}  {\bf{e}}_3 ,\; \\
\qquad\,\,\quad\theta ^{ {\boldsymbol
\cdot} {\boldsymbol \cdot} }  =  - (a \pm b)\sin \theta\}.
\end{array}
\end{equation*}
Семействам ${\mL}_j$ отвечают значения первых интегралов,
заполняющие кривые $\lambda_j$ в составе одномерного остова
бифуркационной диаграммы:
\begin{equation*}
\begin{array}{l}
\lambda_{1,2}=\{g = a^2 h\pm a r^2,k=(h\pm
2a)^2,h \geqslant \mp(a\pm b)\},\\
\lambda_{3,4}=\{g = b^2 h\mp b r^2, k=(h\pm
2b)^2, h \geqslant -(a\pm b)\},\\
\lambda_{5,6}=\{g=\pm abh,k=(a\mp b)^2,h\geqslant -(a\pm b)\}.
\end{array}
\end{equation*}
В таблице~\ref{tab32} приводится соответствие между кривыми
$\lambda_j$ и одномерными остовами диаграмм $\Sigma_2^*$ и
$\Sigma_3^*$, при этом сами кривые разбиты на участки $\lambda_{j
i}$ значениями интегралов на уровнях, содержащих неподвижные точки
системы. На этих же уровнях могут находиться и движения в составе
семейств ${\mL}_j$, асимптотические к неподвижным точкам.

{\renewcommand{\arraystretch}{1.5} \setlength{\extrarowheight}{-2pt}
\begin{table}[!htbp]
\centering
\begin{tabular}{|c|l|c|c|}
\multicolumn{4}{r}{\fts{Таблица
\myt\label{tab32}}}\\
 \hline
$\mK^1$& \multicolumn{1}{c|}{
\begin{tabular}{c}
Образ в\\
 ${\bR}^3(h,k,g)$
 \end{tabular}}&
 \begin{tabular}{c}
Образ в\\
 ${\bR}^2(m,h)$
\end{tabular} &
\begin{tabular}{c}
Образ в\\
 ${\bR}^2(s,h)$
\end{tabular} \\
\hline ${\mL}_1$ &\begin{tabular}{l}
$\lambda_1=\lambda_{11}\cup\lambda_{12},$\\
$\lambda_{11}:-(a+b)<h<-(a-b),$\\
$\lambda_{12}:h>-(a-b),$
\end{tabular}
&$h=r^2m-2a$ &
$s=-a$\\
\hline ${\mL}_2$ &\begin{tabular}{l}
$\lambda_2=\lambda_{21}\cup\lambda_{22}\cup\lambda_{23},$\\
$\lambda_{21}:a-b<h<a+b$,\\
$\lambda_{22}:a+b<h<2a$,\\
$\lambda_{23}: h>2a$
\end{tabular}
 &$h=r^2m+2a$&
$s=a$\\
\hline ${\mL}_3$ &
\begin{tabular}{l}
$\lambda_3=\lambda_{31}\cup\lambda_{32}\cup\lambda_{33},$\\
$\lambda_{31}:-(a+b)<h<-2b$,\\
$\lambda_{32}:-2b<h<a-b$,\\
$\lambda_{33}: h>a-b$
\end{tabular}
&$h=-r^2m-2b$& $s=-b$
\\
\hline ${\mL}_4$ &
\begin{tabular}{l}
$\lambda_4=\lambda_{41}\cup\lambda_{42}\cup\lambda_{43},$\\
$\lambda_{41}:-(a-b)<h<2b$,\\
$\lambda_{42}:2b<h<a+b$,\\
$\lambda_{43}: h>a+b$
\end{tabular}
&$h=-r^2m+2b$ &
$s=b$\\
\hline ${\mL}_5$ &
\begin{tabular}{l}
$\lambda_5=\bigcup_{j=1}^4\,\lambda_{5 j},$\\
$h > -(a+b), h^2 > 4 ab$
\end{tabular}
&--&
\begin{tabular}{l}
$\lambda_{51}: s\in(-a,-b)$,\\
$\lambda_{52}:s\in(0,b)$,\\
$\lambda_{53}:s\in(b,a)$,\\
$\lambda_{54}:s\in(a,+\infty)$,\\
$h=s+\frac{ab}{s}$
\end{tabular}
\\
\hline ${\mL}_5$
&\begin{tabular}{l}$\lambda_{5 0}, -2\sqrt{ab}<h<2\sqrt{ab}$\end{tabular}&--&--\\
\hline  ${\mL}_6$ &
\begin{tabular}{l}
$\lambda_6=\bigcup_{j=1}^4\,\lambda_{6j},$\\
$h > -(a-b)$
\end{tabular}
&-- &\begin{tabular}{l}
$\lambda_{61}:s\in(-a,-b)$,\\
$\lambda_{62}:s\in(-b,0)$,\\
$\lambda_{63}:s\in(b,a)$,\\
$\lambda_{64}:s\in(a,+\infty)$,\\
$h=s-\frac{ab}{s}$
\end{tabular}\\
\hline
\end{tabular}\,
\end{table}
}
Оставшуюся часть множества $\mK^1$ составляют критические движения
случая Богоявленского. Они аналитически организованы в три семейства
периодических решений, ниже обозначаемых через ${\mL}_7-{\mL}_9$
(топологически последнее семейство заполняет в $\mP^6$ два связных
двумерных многообразия) и принадлежат $\mm \cap \mo$. Первое
описание этих траекторий дано в \cite{Zot}. Здесь удобно
воспользоваться параметризацией этих семейств, указанной в
\cite{Kh37}. Она представляет собой алгебраические выражения
исходных фазовых переменных через одну вспомогательную переменную,
зависимость которой от времени выражается стандартными
эллиптическими функциями, при том, что и постоянные всех интегралов
также явно выражены через одну постоянную, а именно, через
постоянную $s$ соответствующего интеграла (\ref{eq1.11}). Имеем
следующие выражения для фазовых переменных:
\begin{equation}\label{equa:delone}
\begin{array}{l}
\omega_1=\ds{-\frac{1}{r} \sqrt{\frac{r_1 r_2 \psi_1}{2 s}}},\quad
\omega_2=-\ds{\frac{1}{r} \sqrt{-\frac{r_1 r_2 \psi_2}{2 s}}},\quad
\omega_3= \ds{\frac{\theta}{r_1+r_2} \sqrt{\frac{2 r_1
r_2}{s}}},\\[4mm]
\alpha_1=-s -\ds{\frac{r_1 r_2 \bigl[r^2+(r_1+r_2)^2
\bigr](\psi_1+\psi_2)}{4r^2(r_1+r_2)^2 s}},\\[4mm]
\alpha_2=-\ds{\frac{r_1 r_2 \bigl[r^2+(r_1+r_2)^2
\bigr]\sqrt{-\psi_1 \psi_2}}{2 r^2
(r_1+r_2)^2 s}},\quad
\alpha_3=\ds{\frac{r_1 \sqrt{-\psi_1}}{r}},\\[4mm]
\beta_1=\ds{\frac{r_1 r_2 \bigl[r^2-(r_1+r_2)^2\bigr]\sqrt{-\psi_1
\psi_2}}{2 r^2 (r_1+r_2)^2 s}},\\[4mm]
\beta_2=-s-\ds{\frac{r_1 r_2
\bigl[r^2-(r_1+r_2)^2\bigr](\psi_1+\psi_2)}{4 r^2
(r_1+r_2)^2 s}},\quad
\beta_3=\ds{\frac{r_2\sqrt{-\psi_2}}{r}},
\end{array}
\end{equation}
где
\begin{equation*}\label{psi:delone}
\begin{array}{ll}
\psi _1  = (\theta  - e'_ +  )(\theta  - e'_ -  ), & e'_ \pm   =
\displaystyle{\frac{r_1  + r_2 } {r_1 }(s \pm a)},
\\[3mm]
\psi _2  = (\theta  - e''_ +  )(\theta  - e''_ -  ), & e''_ \pm   =
\displaystyle{\frac{r_1  + r_2 } {r_2 }(s \pm b),}
\end{array}
\end{equation*}
и для различных семейств следует положить
\begin{equation*}\label{ri:delone}
\begin{array}{ll}
{\mL}_7: s \in [ - b,0), & r_1  = \sqrt {\mstrut a^2 - s^2 }
> 0, \quad r_2 = \sqrt {\mstrut b^2  - s^2 } \geqslant 0, \\[3mm]
{\mL}_8: s \in (0,b], & r_1  = \sqrt {\mstrut a^2  - s^2 }  > 0,
\quad r_2  =  - \sqrt {\mstrut b^2  - s^2 }  \leqslant 0, \\[3mm]
{\mL}_9: s \in [a, + \infty ), & \left\{\begin{array}{l} r_1  = \ri
\, r_1^*, \quad r_2  = \ri \,r_2^* \\ 0 \leqslant r_1^*  = \sqrt
{\mstrut s^2  - a^2 } < r_2^*  = \sqrt {\mstrut s^2 - b^2 }
\end{array}.\right.
\end{array}
\end{equation*}
Значения первых интегралов на этих движениях определяют еще одну
часть одномерного остова бифуркационной диаграммы $\Sigma$ в виде
трех кривых $\delta_i \subset \bR^3$ $(i=1,2,3)$, параметрические
уравнения которых получены в \cite{Kh37}
\begin{equation}\label{deltas}
\begin{array}{l}
\delta_1: \left\{
\begin{array}{l}
\displaystyle{h=2s-\frac{1}{s}\sqrt{(a^2-s^2)(b^2-s^2)}   }\\[2mm]
\displaystyle{f=\pm\sqrt{-\frac{2}{s}\sqrt{(a^2-s^2)(b^2-s^2)}}(\sqrt{a^2-s^2}+
\sqrt{b^2-s^2})}\\[3mm]
\displaystyle{\tau=(\sqrt{a^2-s^2}+\sqrt{b^2-s^2})^2}, \quad s \in
[-b,0)
\end{array} \right. ; \\
\delta_2: \left\{
\begin{array}{l}
\displaystyle{h=2s+\frac{1}{s}\sqrt{(a^2-s^2)(b^2-s^2)}   }\\[2mm]
\displaystyle{f=\pm\sqrt{\frac{2}{s}\sqrt{(a^2-s^2)(b^2-s^2)}}(\sqrt{a^2-s^2}-
\sqrt{b^2-s^2})}\\[3mm]
\displaystyle{\tau=(\sqrt{a^2-s^2}-\sqrt{b^2-s^2})^2}, \quad s \in
(0,b]
\end{array} \right. ; \\
\delta_3: \left\{
\begin{array}{l}
\displaystyle{h=2s-\frac{1}{s}\sqrt{(s^2-a^2)(s^2-b^2)}   }\\[2mm]
\displaystyle{f=\pm\sqrt{\frac{2}{s}\sqrt{(s^2-a^2)(s^2-b^2)}}(\sqrt{s^2-b^2}-
\sqrt{s^2-a^2})}\\[3mm]
\displaystyle{\tau=-(\sqrt{s^2-b^2}-\sqrt{s^2-a^2})^2}, \quad s \in
[a,+\infty)
\end{array} \right. .
\end{array}
\end{equation}
Здесь $s,\tau$ -- значения функций (\ref{eq1.11}), (\ref{intT}) на
соответствующих множествах. Как отмечено выше, $s$ выбирается за
независимый параметр. Значение $k \equiv 0$, зависимость $g(s)$
находится из (\ref{gbogo}).

Семейства ${\mL}_1 - {\mL}_4$ составляют
трансверсальное пересечение критических
многообразий $\mn \cap \mo$, а ${\mL}_7 -
{\mL}_9$ -- трансверсальное пересечение $\mm
\cap \mo$. По отношению к критическим
подсистемам с двумя степенями свободы они могут
иметь лишь эллиптический или гиперболический
тип. Из топологических результатов работ
\cite{Zot}, \cite{KhSav}, \cite{Kh2009},
\cite{Kh37}, \cite{Zot1}
 можно получить утверждения о типе этих
траекторий \textit{внутри} $\mm, \mn, \mo$. Отсюда тип этих
критических точек в трехмерной классификации можно было бы
определить по предложению~\ref{propm}. Однако эти утверждения не
подтверждены вычислениями. Поэтому мы явно выпишем необходимые
характеристические многочлены, получим тип точек этих семейств в
$\mP^6$, откуда будет следовать и доказательство утверждений по
отношению к критическим подсистемам с двумя степенями свободы.

Одномерный остов $\Sigma^1$ бифуркационной диаграммы, кроме значений
$\mF$ в точках множества $\mK^1$, может порождаться касанием листов
$\Sigma_i$. Соответствующие критические точки, хоть и имеют ранг 2,
но являются вырожденными. Доказательство этого факта требует
обоснования {\it несуществования} регулярного элемента в
соответствующей подалгебре, что всегда сложнее, чем его явное
нахождение. Здесь мы не будем приводить строгих обоснований и
вычислять тип этих точек. Ограничимся указанием уравнений
соответствующих участков на $\Sigma$ и неравенств, определяющих
область существования движений на этих участках. Имеют место
следующие случаи. Касание листов $\Sigma_1, \Sigma_2$ происходит в
точках множества
\begin{equation*}\label{d1}
\Delta_1: \left\{\begin{array}{l} k=0, \\
2g=p^2h
\end{array}\right., \quad h \geqslant -2b.
\end{equation*}
Касание листов $\Sigma_2, \Sigma_3$ происходит в точках множества
\begin{equation*}\label{d2}
\Delta_2: \left\{\begin{array}{l}
k=\ds{\frac{1}{r^4}(2g-p^2h)^2} \\
g=g_{\pm}(h)=\ds{\frac{1}{4p^2}} \left[(2p^4-r^4)h \pm r^4 \sqrt{\mstrut
h^2-2p^2} \right]
\end{array}\right., \quad h \geqslant \sqrt{\mstrut 2p}.
\end{equation*}
Кроме того, на листе $\Sigma_3$ имеется ребро возврата
("самокасание"), которое описывается параметрическими уравнениями
\begin{equation}\label{d3}
\Delta_3: \left\{\begin{array}{l}
h=\ds{\frac{p^4-r^4+12s^4}{8s^3}}\\
k=\ds{-\frac{3s^2}{4}+ p^2 -\frac{3(p^4-r^4)}{8s^2}+\frac{(p^4-r^4)^2}{64s^6}} \\
g=\ds{\frac{3(p^4-r^4)+4s^4}{8s}}
\end{array}\right., \quad 0 < s \leqslant s_0.
\end{equation}
Здесь $s_0$ -- единственный корень уравнения
\begin{equation}\label{s0}
3s^8 -
4p^2s^6+\ds{\frac{3}{2}}(p^4-r^4)s^4-\ds{\frac{(p^4-r^4)^2}{16}}=0
\end{equation}
на полупрямой $s>a$. Ниже мы явно увидим, что в характеристических
многочленах операторов, выбранных в качестве регулярных, при этих
условиях возникает кратный корень. Эти множества обладают еще одним
важным свойством, которое оказывается связанным с вырожденностью
критических точек в $\mP^6$, а именно, как показано в работах
\cite{Zot}, \cite{KhSav}, \cite{Kh2007}, на множествах
$\mathcal{M}_i \cap \mF^{-1}(\Delta_i)$ $(i=1,2,3)$ вырождается
2-форма, индуцированная на $\mathcal{M}_i$ симплектической
структурой многообразия $\mP^6$. Заметим, что внутри $\mm$
соответствующие точки регулярны, внутри $\mn, \mo$ регулярны почти
все из них (за исключением точек пересечения с $\mK^1$). Образы
множеств $\Delta_i$ на поверхностях $\pov_i$ в выбранных там
параметрах представлены в табл.~\ref{tabdel}.

{
\renewcommand{\arraystretch}{1.5}
\setlength{\extrarowheight}{-2pt}
\begin{table}[!htbp]
\centering
\begin{tabular}{|c|c|c|c|}
\multicolumn{4}{r}{\fts{Таблица
\myt\label{tabdel}}}\\
\hline & \begin{tabular}{c}Образ в\\
 ${\bR}^2(h,f^2)$
 \end{tabular}&
\begin{tabular}{c}Образ в\\
 ${\bR}^2(m,h)$
 \end{tabular}&
\begin{tabular}{c}Образ в\\
 ${\bR}^2(s,h)$
 \end{tabular}\\
\hline $\Delta_1$ &
\begin{tabular}{l}
$f=0,$\\
$h\geqslant -2b$
\end{tabular} &
\begin{tabular}{l}
$m=0,$ \\
$h\geqslant -2b$
\end{tabular} &
--\\
\hline

$\Delta_2$ & -- &
\begin{tabular}{l}
$h=-(a^2+b^2)m -\ds{\frac{1}{2m}},$\\
$m<0$\\
\end{tabular}&
\begin{tabular}{l}
$h=\ds{\frac{a^2+b^2}{2s}}+s,$\\
$s>0$
\end{tabular}\\
\hline $\Delta_3$ & -- & --
 &
\begin{tabular}{l}
$h=\ds{\frac{3s^4+a^2b^2}{2s^3}},$\\
$0<s\leqslant s_0$
\end{tabular}\\
\hline
\end{tabular}
\end{table}
}

Собирая образы точек $P_i$, одномерных множеств
$\lambda_i,\delta_i,\Delta_i$ на плоскостях значений индуцированных
отображений момента $\mF_i$, получим бифуркационные диаграммы
$\Sigma_i^*$ этих отображений (рис.~\ref{fig_sig1}--\ref{fig_sig3}).
Части $\Sigma_i$ бифуркационной диаграммы полного отображения
момента отобразятся на выбранных плоскостях как оболочка каркасов
$\Sigma_i^*$. При этом не включаются те связные компоненты
$\bR^2\backslash \Sigma_i^*$, которые помечены на рисунках символом
$\emptyset$.

\begin{figure}[!htbp]
 \centering
\includegraphics[width=0.5\textwidth,keepaspectratio]{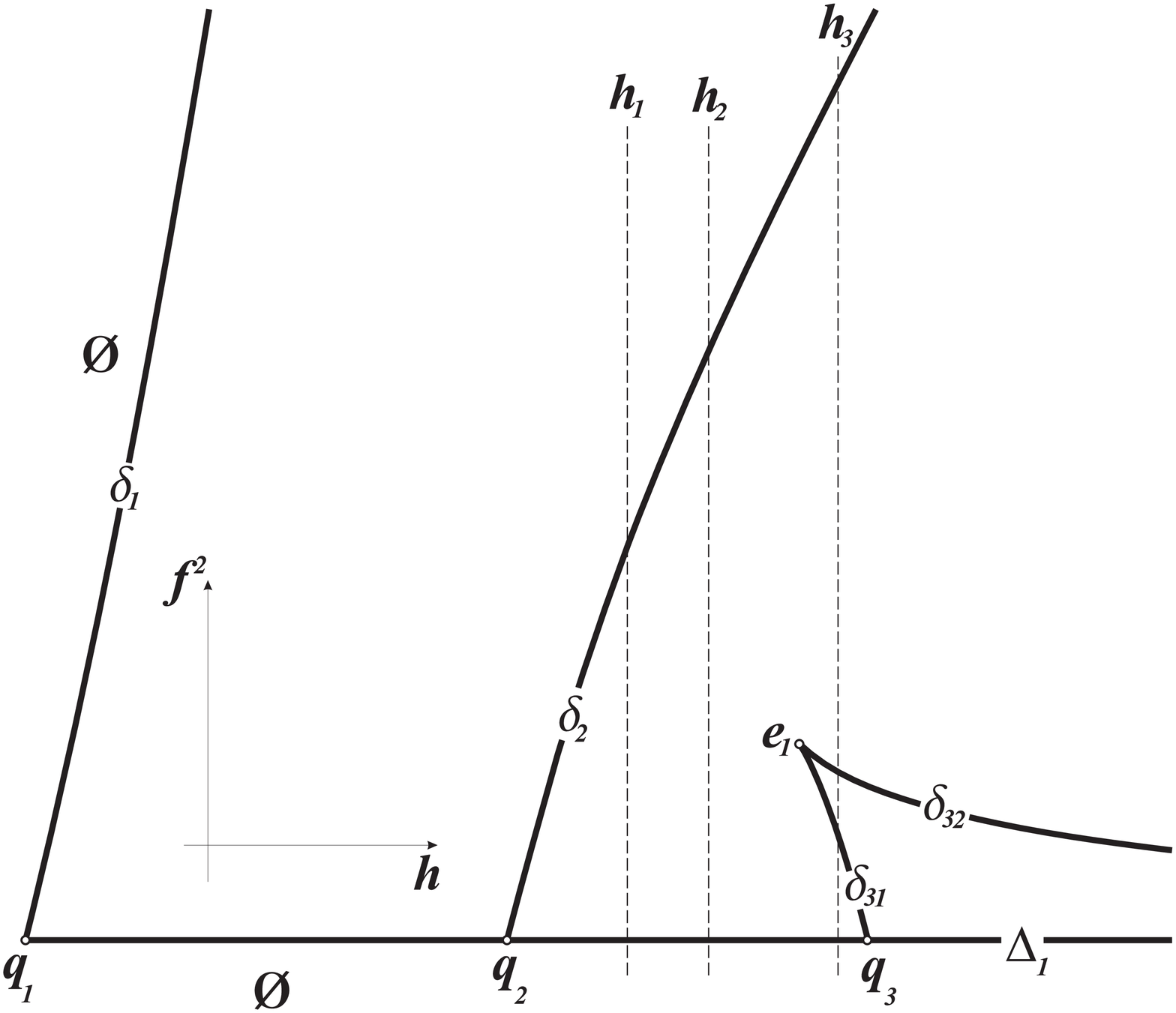}
\parbox[t]{0.9\textwidth}{\caption{Бифуркационная диаграмма
$\Sigma_1^*$  на плоскости $(h,f^2)$.}\label{fig_sig1}}
\end{figure}

\begin{figure}[!htbp]
\centering
\includegraphics[height=11cm,width=10cm,keepaspectratio]{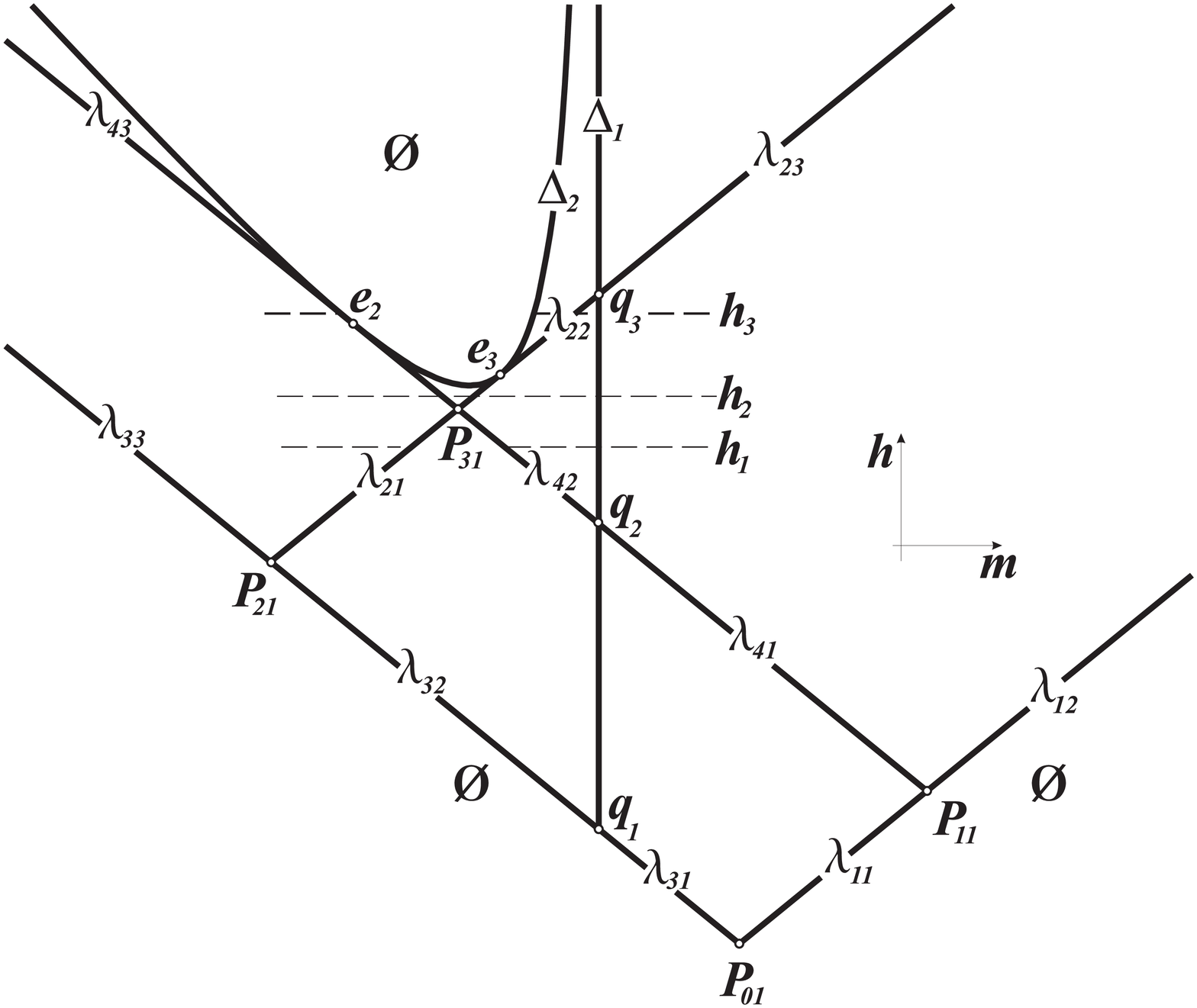}
\parbox[t]{0.9\textwidth}{\caption{Бифуркационная диаграмма
$\Sigma_2^*$ на плоскости $(m,h)$.}\label{fig_sig2}}
\end{figure}

\begin{figure}[!htbp]
 \centering
\includegraphics[height=10cm,width=10cm,keepaspectratio]{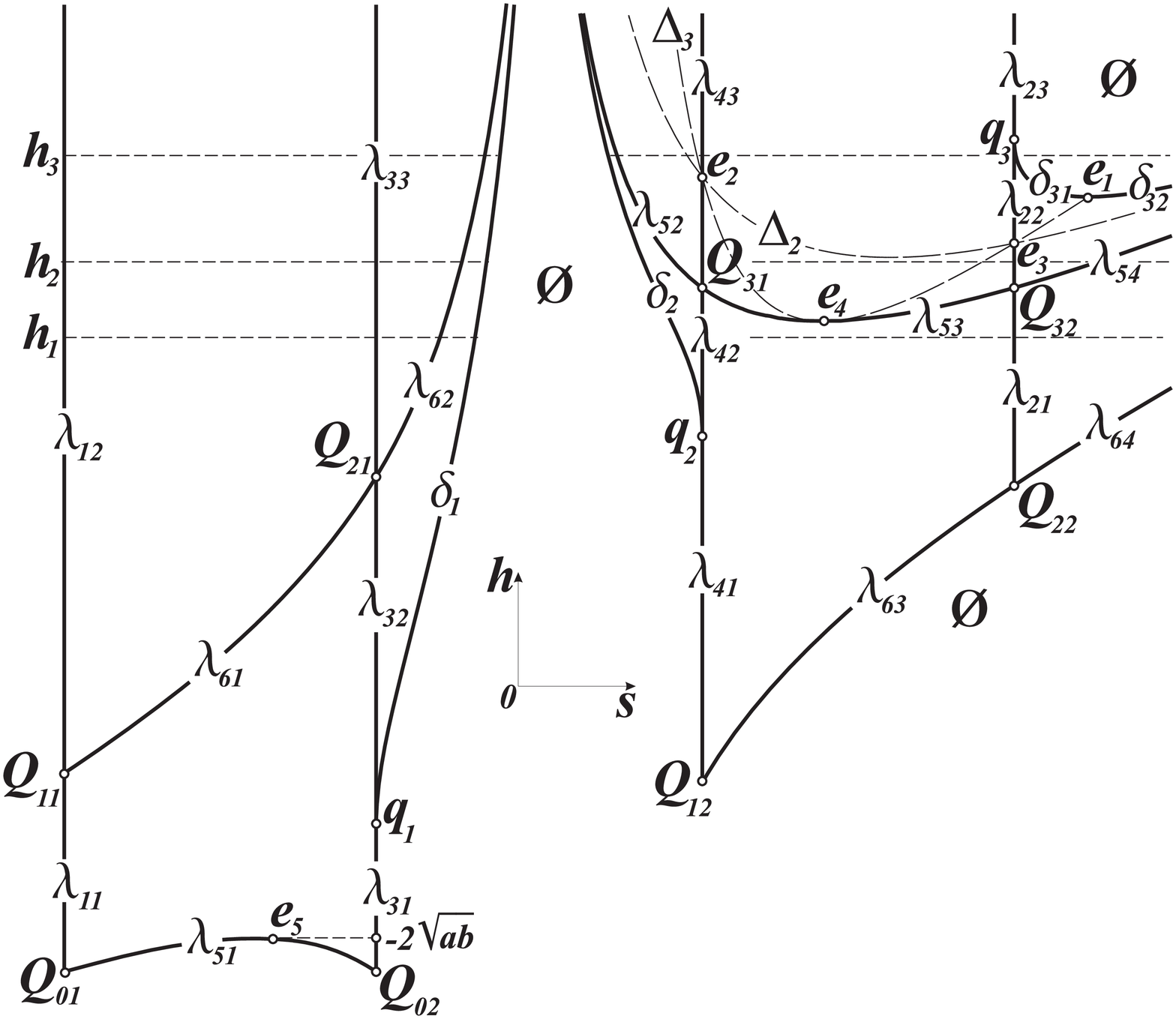}
\parbox[t]{0.9\textwidth}{\caption{Бифуркационная диаграмма
$\Sigma_3^*$  на плоскости $(s,h)$.}\label{fig_sig3}}
\end{figure}

Таким образом, $\Sigma_1^*$ состоит из кривых $\delta_1$ --
$\delta_3$ и луча $\Delta_1$ (рис.~\ref{fig_sig1}). На рисунке также
введены обозначения некоторых особых точек $q_k$, $e_k$, которые
фигурируют и на последующих рисунках. Отмечены три значения энергии
$h_1 - h_3$, для которых ниже показаны диаграммы $\Sigma(h)$.

Диаграмма $\Sigma_2^*$ (рис.~\ref{fig_sig2}) состоит из кривых
$\lambda_1 - \lambda_4$ (напомним, что мы не различаем объекты в
$\bR^3(h,k,g)$ и их образы на плоскостях значений индуцированных
отображений момента), множеств $\Delta_1$ и $\Delta_2$. Ниже нам
будет удобно ввести на $\mn $ вместо $H$ другой почти всюду
независимый с $M$ интеграл \cite{KhSav}
\begin{equation*}\label{lkhsav}
\displaystyle{L = \frac{1} {{\sqrt {x_1 x_2 } }}[w_1 w_2  + {{x_1
x_2  + z_1 z_2 }} M]}.
\end{equation*}
Его постоянная связана с $h,m$ соотношением
\begin{equation}\label{lhm}
\ell^2 = 2p^2 m^2+2 h m + 1,
\end{equation}
а множество $\Delta_2$ задано уравнением $\ell=0$.

Наконец, диаграмма $\Sigma_3^*$ (рис.~\ref{fig_sig3}) порождает
наиболее сложную часть бифуркационной диаграммы $\Sigma$.
Необходимые комментарии будут даны ниже. Пока отметим лишь, что на
ней нашли отражение практически все уже упоминавшиеся обозначения.
Например, точки $q_1 - q_3$ являются граничными точками кривых
$\delta_1 - \delta_3$ как на этом рисунке, так и на
рис.~\ref{fig_sig1}. Точка $e_1$, которая на плоскости $(h,f^2)$
является точкой возврата кривой $\delta_3$, на последнем рисунке
соответствует значению $s_0$ интеграла (\ref{eq1.11}), при котором
достигается минимальное значение $h(s)$ в соответствующем выражении
(\ref{deltas}). Оно же фигурирует в (\ref{d3}) как корень уравнения
(\ref{s0}).

Для полного описания $\Sigma \subset \mP^6$ введем некоторые
обозначения.

Обозначим обращения зависимостей $h(s)$ на кривых (\ref{deltas}):
\begin{equation*}\label{obrdel}
\begin{array}{llll}
\delta_1:& s=s_1(h), & h\in[-2b,+\infty), & s_1(h)\in
[-b,0),\\
\delta_2: & s=s_2(h), & h \geqslant 2b, & s_2(h)\in
(0,b],\\
\delta_{31}: & s=s_{31}(h), & h \in [h_0,2a], & s \in [a, s_0],\\
\delta_{32}: & s=s_{32}(h), & h \in [h_0,+\infty), & s \in
[s_0+\infty).
\end{array}
\end{equation*}
Здесь $h_0$ -- значение $h(s_0)$ на кривой $\delta_3$. Уравнение
(\ref{s0}) для вычисления значения $s_0$ теперь получим, записывая
условие минимума $h$ на кривой $\delta_3$ в виде
\begin{equation}
\label{equa:s0} 2s^2\sqrt{(s^2-a^2)(s^2-b^2)}=s^4-a^2b^2.
\end{equation}
Единственность его решения при $s>a$ очевидна.

Из соотношений (\ref{gbogo}) найдем зависимость на~$\delta_1$:
$$
g = g_1(h) = s^3+\frac{ab}{s}-s^2 \phi(s)|_{s=s_1(h)}, \quad h
\geqslant -2b.
$$
Рассматривая интервалы монотонности $h(s)$ на кривых $\lambda_5 -
\lambda_6$ обозначим:
$$
\displaystyle{s_{5-}(h)= \frac{h - \sqrt{h^2-4ab}}{2},}\quad
\displaystyle{s_{5+}(h)= \frac{h + \sqrt{h^2-4ab}}{2},}\quad
\displaystyle{s_6(h)= \frac{h + \sqrt{h^2+4ab}}{2}}.
$$
Теперь бифуркационная диаграмма полностью описывается следующей
теоремой \cite{Kh36}, которая сформулирована так, чтобы все условия
давали явные неравенства при фиксированном значении энергии $h$.

\begin{theorem}
1. Множество $\Sigma_1=\pov_1 \cap \Sigma$ имеет вид
$$
k = 0,\quad  g_1(h) \leqslant g \leqslant \frac{1}{2}p^2h, \quad h
\geqslant -2b.
$$
2. Множество $\Sigma_2=\pov_2 \cap \Sigma$ лежит в полупространстве
$h \geqslant -(a+b)$ и описывается следующей совокупностью систем
неравенств
\begin{eqnarray*}
&&\left\{ {\begin{array}{l} b^2 h -b r^2\leqslant g \leqslant a^2 h +a
r^2\\
-(a+b)\leqslant h \leqslant \sqrt{2}p
\end{array}} \right.; \\[4mm]
&&\left\{ {\begin{array}{l} b^2 h -b r^2 \leqslant g \leqslant g_{-}(h)\\
h \geqslant \sqrt{2}p
\end{array}} \right.; \\[4mm]
&&\left\{ {\begin{array}{l} g_{+}(h) \leqslant g \leqslant a^2 h +a
r^2 \\
h \geqslant \sqrt{2}p
\end{array}} \right..
\end{eqnarray*}
3. Множество $\Sigma_3=\pov_3 \cap \Sigma$ полностью описывается
следующей совокупностью условий на плоскости $(s,h)$. Для
отрицательных значений $s$:
$$
\left\{ {\begin{array}{l} -(a+b)\leqslant h \leqslant -2 \sqrt{ab}\\
s \in [-a,s_{5-}(h)] \cup [s_{5+}(h),-b]
\end{array}} \right.; \quad
\left\{ {\begin{array}{l} -2 \sqrt{ab}\leqslant h \leqslant -2b\\
s \in [-a,-b]
\end{array}} \right.; \quad
\left\{ {\begin{array}{l} h >  -2b\\
s \in [-a,s_1(h)]
\end{array}} \right..
$$
Для положительных значений $s$:
$$
\begin{array}{ll}
\left\{ {\begin{array}{l} -a+b\leqslant h \leqslant 2b\\
s \in [b,s_6(h)]\end{array}} \right.; &
\left\{ {\begin{array}{l} 2b \leqslant h \leqslant h_0\\
s \in [s_2(h),s_6(h)]
\end{array}} \right.; \\[4mm]
\left\{ {\begin{array}{l} h_0 \leqslant h \leqslant 2a\\
s \in [s_2(h),s_{31}(h)]\cup[s_{32}(h),s_6(h)]
\end{array}} \right.; &
\left\{ {\begin{array}{l} h > 2a\\
s \in [s_2(h),a]\cup[s_{32}(h),s_6(h)]
\end{array}} \right..
\end{array}
$$
4. Множество $\Sigma_4=\pov_4 \cap \Sigma$ состоит из двух лучей
\begin{equation}\notag
\begin{array}{lll}
g = abh,& k = (a - b)^2 ,& h \geqslant  - (a +
b),\\
g =  - abh,& k = (a + b)^2 ,& h \geqslant  - (a - b).
\end{array}
\end{equation}
\end{theorem}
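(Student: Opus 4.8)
The plan is to prove each of the four parts separately, starting from the identity $\Sigma_i=\Sigma\cap\pov_i=\mF(\mM_i)$, which follows from $\mK=\bigcup_i\mM_i$ and $\mM_i=\mK\cap\mF^{-1}(\pov_i)$. Thus it suffices to describe the image of each critical subsystem under the momentum map. Each $\mM_i$ is an integrable subsystem whose restricted map $\mF|_{\mM_i}$ has already been analysed: its bifurcation diagram $\Sigma_i^*$ is drawn in the reduced coordinates $(h,f^2)$, $(m,h)$, $(s,h)$ (Figs.~\ref{fig_sig1}--\ref{fig_sig3}), and the surface $\pov_i$ is explicitly parametrized by these same coordinates through (\ref{gbogo}), the defining relations of $\pov_2$, and (\ref{ghar2}). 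Hence $\Sigma_i$ is obtained by (a) identifying the planar region $R_i$ actually realized by phase points of $\mM_i$ --- whose boundary consists of curves of $\Sigma_i^*$ (images of $\mM_i\cap\mK^1$) together with the edges where the subsystem ceases to exist --- and (b) pushing $R_i$ forward to $(g,k,h)$ by the parametrization of $\pov_i$.

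For part~1, on $\mm$ we have $k\equiv0$ and $g=\tfrac12 p^2h-\tfrac14 f^2$, so for each fixed $h$ the coordinate $g$ is an affine decreasing function of $f^2\geqslant0$. Its maximum $g=\tfrac12 p^2h$ is attained at $f=0$ (the ray $\Delta_1$), and its minimum at the largest admissible $f^2$, which is the outer branch $\delta_1$ of $\Sigma_1^*$ (Fig.~\ref{fig_sig1}); evaluating (\ref{gbogo}) along $\delta_1$ gives precisely $g=g_1(h)$. Since $\mm$ exists only for $h\geqslant-2b$, this yields the strip $k=0$, $g_1(h)\leqslant g\leqslant\tfrac12 p^2h$, $h\geqslant-2b$.

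For parts~2 and~3 the scheme is the same but the parametrization has non-trivial fibres. For $\mn$ one inserts the realized $(m,h)$-region of $\Sigma_2^*$ --- bounded by the lines $\lambda_1,\ldots,\lambda_4$ and by the segments $\Delta_1,\Delta_2$ (Fig.~\ref{fig_sig2}) --- into $k=r^4m^2$, $g=\tfrac12(p^2h-r^4m)$. The extreme lines $\lambda_1$ and $\lambda_3$ become the linear envelopes $g=a^2h+ar^2$ and $g=b^2h-br^2$, while the gluing segment $\Delta_2$ transforms into the pair of curves $g=g_\pm(h)$ appearing for $h\geqslant\sqrt{2}\,p$; recording which envelope is active as $h$ crosses $\sqrt{2}\,p$ produces the three inequality systems. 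For $\mo$ the surface $\pov_3$ is parametrized by the single variable $s$ through (\ref{ghar2}), and the admissible $s$-intervals at each level $h$ are dictated by the reality of the Delone-type solution (\ref{equa:delone}) (positivity of $r_1r_2\psi_1/(2s)$ and of $-r_1r_2\psi_2/(2s)$) on the ranges $s\in[-b,0)$, $(0,b]$, $[a,+\infty)$, together with the branches $\lambda_5,\lambda_6$. Splitting the $(s,h)$-plane at the critical energies $-2\sqrt{ab}$, $-2b$, $-(a-b)$, $2b$, $h_0$, $2a$ and inverting $h=h(s)$ on each branch (yielding $s_1,s_2,s_{31},s_{32},s_{5\pm},s_6$) gives the listed unions of segments. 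Finally, part~4 is immediate: $\ml$ is the family of pendulum equilibria $w_1=w_2=z_1=z_2=0$, along which $k=(a\mp b)^2$ is constant and $g=\pm abh$, with energy ranging over $h\geqslant-(a+b)$ and $h\geqslant-(a-b)$; this gives the two rays.

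The main obstacle is part~3. Because the parametrizing function $h=h(s)$ on the $\delta$-branches is non-monotone, and $\mo$ meets both $\mm$ and $\mn$ along $\mK^1$, one must decide for each energy level exactly which $s$-subintervals carry regular preimages and where two branches coalesce. The folds occur at the root $s_0$ of (\ref{s0})/(\ref{equa:s0}) and at $s=\pm a,\pm b$, so the delicate step is to verify that the inverse functions $s_1,s_2,s_{31},s_{32},s_{5\pm},s_6$ patch together into the stated segments with no spurious part of $\pov_3$ included. Here Proposition~\ref{propm} supplies the needed local control: at the rank-one points along the intersection curves, the non-degeneracy and the additive splitting of types guarantee that the sheets of the diagram meet as claimed, so the global picture is assembled correctly from the pieces.
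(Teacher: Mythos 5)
Your overall skeleton --- $\Sigma_i=\Sigma\cap\pov_i=\mF(\mathcal{M}_i)$, determine the region actually realized by the reduced map $\mF_i$ in the auxiliary coordinates $(h,f^2)$, $(m,h)$, $(s,h)$, then push it forward through the explicit parametrization of $\pov_i$ --- is exactly the route the paper takes, and your parts 1, 2 and 4 reproduce it faithfully: the paper itself does not re-derive the realized regions but takes the diagrams $\Sigma_i^*$ as established in \cite{Kh36} (and \cite{Zot}, \cite{KhSav}), merely recasting their boundary curves through the inverse functions $g_1(h)$, $s_1,s_2,s_{31},s_{32},s_{5\pm},s_6$ introduced just before the theorem. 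Your monotonicity argument for part 1 (for fixed $h$, $g=\frac12p^2h-\frac14f^2$ decreases in $f^2$, so $\Delta_1$ and $\delta_1$ give the two envelopes) and the affine push-forward for part 2 (with $\Delta_2$, i.e. $\ell=0$ in (\ref{lhm}), producing the window between $g_-(h)$ and $g_+(h)$ for $h\geqslant\sqrt2\,p$) are correct and consistent with the paper; a small slip is that only $\lambda_1,\lambda_3$ bound the region of $\Sigma_2$, while $\lambda_2,\lambda_4,\Delta_1$ lie in its interior, but you in fact use only $\lambda_1,\lambda_3$ as envelopes.

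The genuine gap is in part 3, in the step you yourself flag as the main obstacle. The reality conditions of (\ref{equa:delone}) cannot yield the admissible $s$-intervals at a given level $h$: those formulas parametrize only the one-parameter families ${\mL}_7$--${\mL}_9$ of rank-one critical motions, i.e. precisely the curves $\delta_1$--$\delta_3$ of (\ref{deltas}), along which $h$ is a function of $s$; they carry no information about the two-dimensional set of pairs $(s,h)$ realized on $\mo$, whose interior points are not of this form at all. Deciding on which side of each curve $\lambda_5,\lambda_6,\delta_i,\Delta_2,\Delta_3$ the image of $\mF_3$ lies requires the explicit integration of the flow on $\mo$ --- the separated variables (\ref{ne4_39}), whose radicals' reality conditions cut out the region (this is the content of \cite{Kh36}, \cite{Kh2007}, \cite{Kh2009}, to which the paper appeals). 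Nor does Proposition~\ref{propm} close the hole: it presupposes nondegeneracy of the rank-one points, which is established only later (Section 5) and fails exactly at the exceptional points relevant here (the fold of $\delta_3$ at $s_0$, where the transverse type switches, and the excluded energies on $\mL_2,\mL_3,\mL_4,\mL_5$); moreover it only constrains how the \emph{types} split between the two subsystems through a given point, not which local configuration (boundary arc versus interior arc of the diagram) occurs, and even complete local information would not by itself give the global statement that, say, for $h>2a$ the section is $[s_2(h),a]\cup[s_{32}(h),s_6(h)]$ with nothing between $a$ and $s_{32}(h)$. So as written, part 3 asserts rather than proves the stated intervals; to repair it you must either import the region description from \cite{Kh36} explicitly (as the paper does) or run the reality analysis of (\ref{ne4_39}).
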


Эта теорема дает возможность изобразить во всех деталях любую
диаграмму $\Sigma(h)$ на изоэнергетическом уровне и отследить с
помощью компьютерной графики эволюцию этих диаграмм с изменением
энергии. Разделяющими служат значения $h$, фигурирующие как границы
для неравенств в утверждении теоремы. Диаграммы для отмеченных ранее
трех значений $h_1, h_2, h_3$ с увеличенными для наглядности
участками приведены на рис.~\ref{fig1}--\ref{fig3}.
\begin{figure}[!htbp]
\centering
\includegraphics[width=6cm,keepaspectratio]{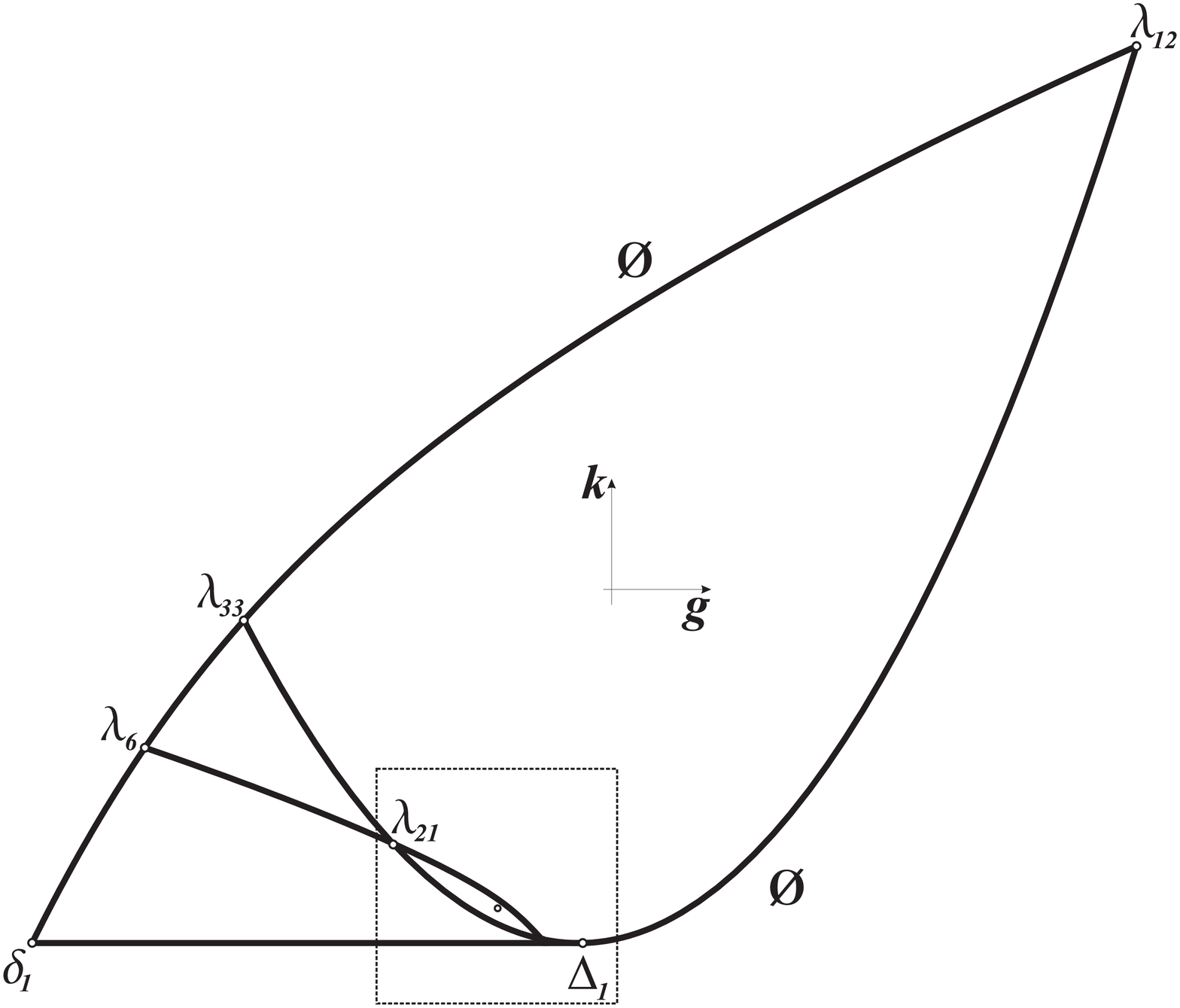}
\includegraphics[width=6cm,keepaspectratio]{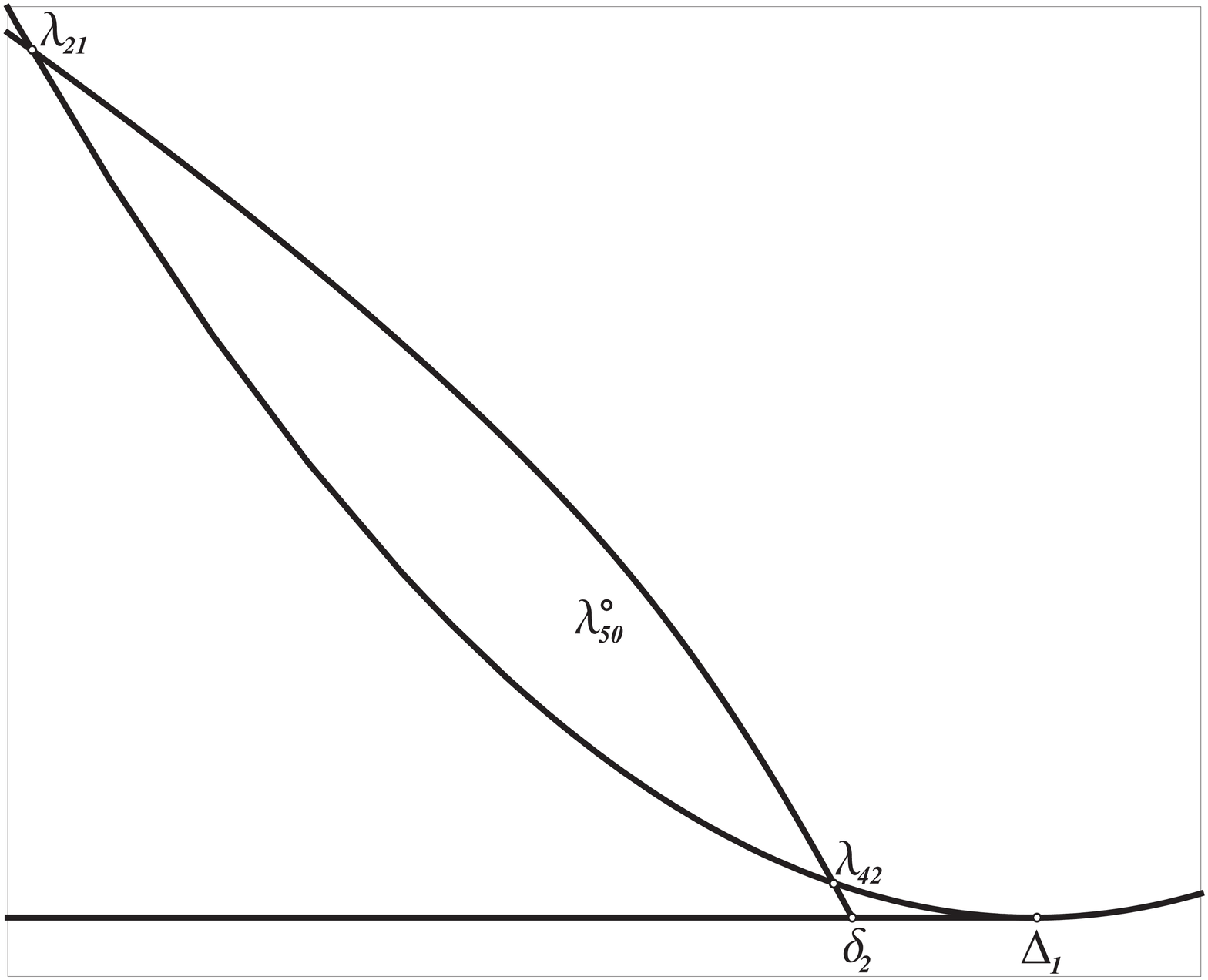}
\parbox[t]{0.9\textwidth}{\caption{Бифуркационная диаграмма
$\Sigma(h_1)$ и ее фрагмент.}\label{fig1}}
\end{figure}

\begin{figure}[!htbp]
\centering
\includegraphics[width=6cm,keepaspectratio]{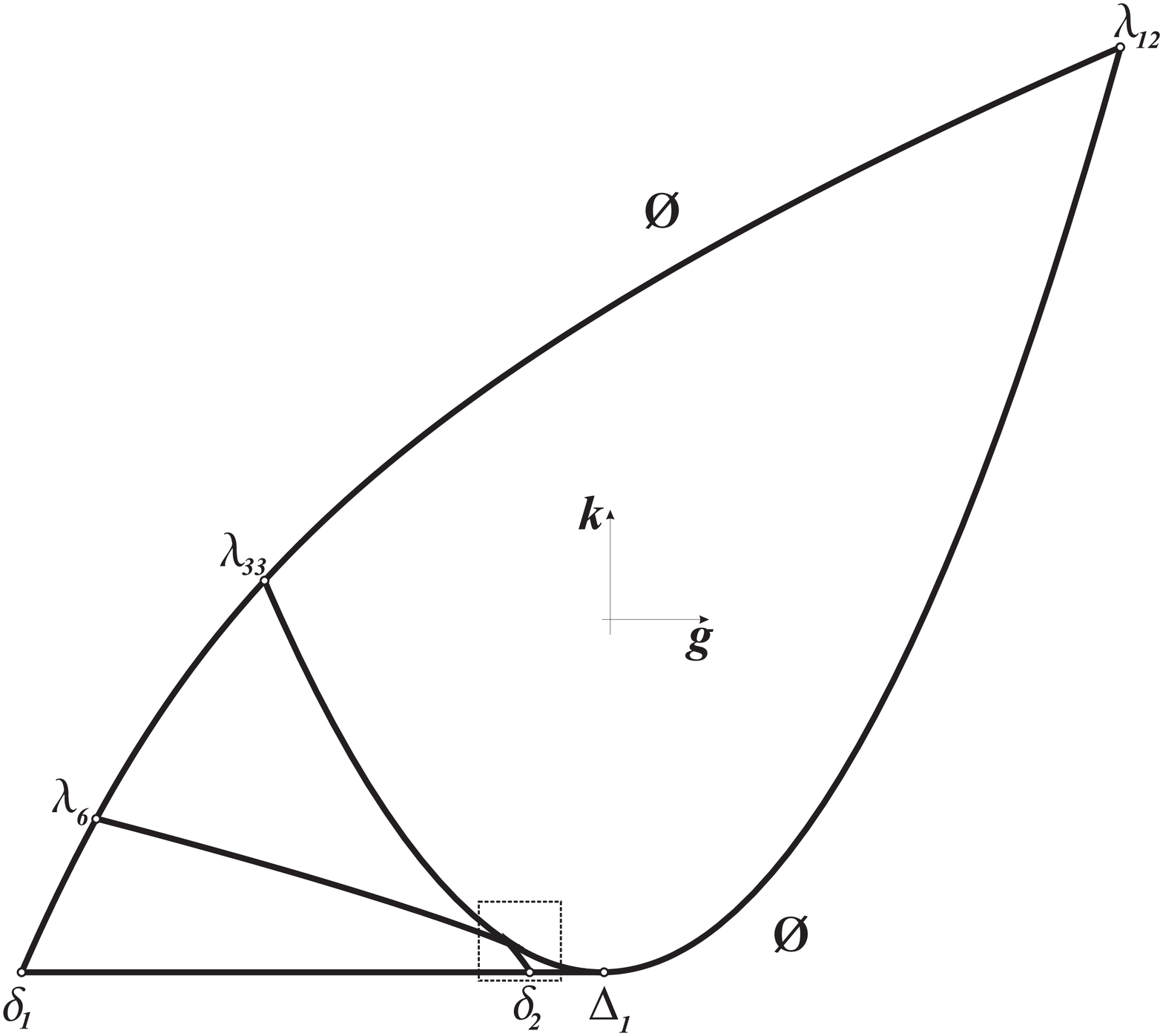}
\includegraphics[width=6cm,keepaspectratio]{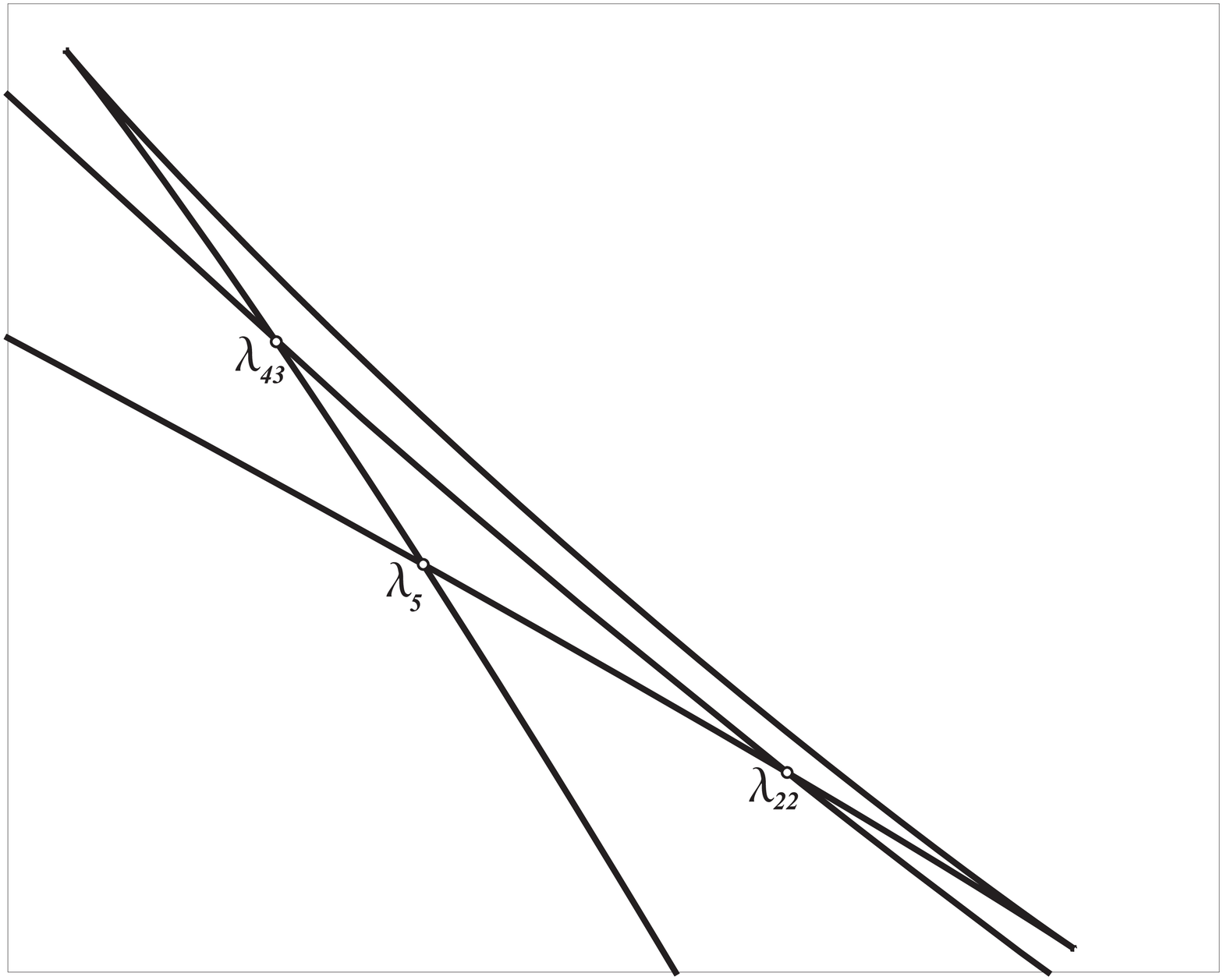}
\parbox[t]{0.9\textwidth}{\caption{Бифуркационная диаграмма
$\Sigma(h_2)$ и ее фрагмент.}\label{fig2}}
\end{figure}

\begin{figure}[!htbp]
\centering
\includegraphics[width=6cm,keepaspectratio]{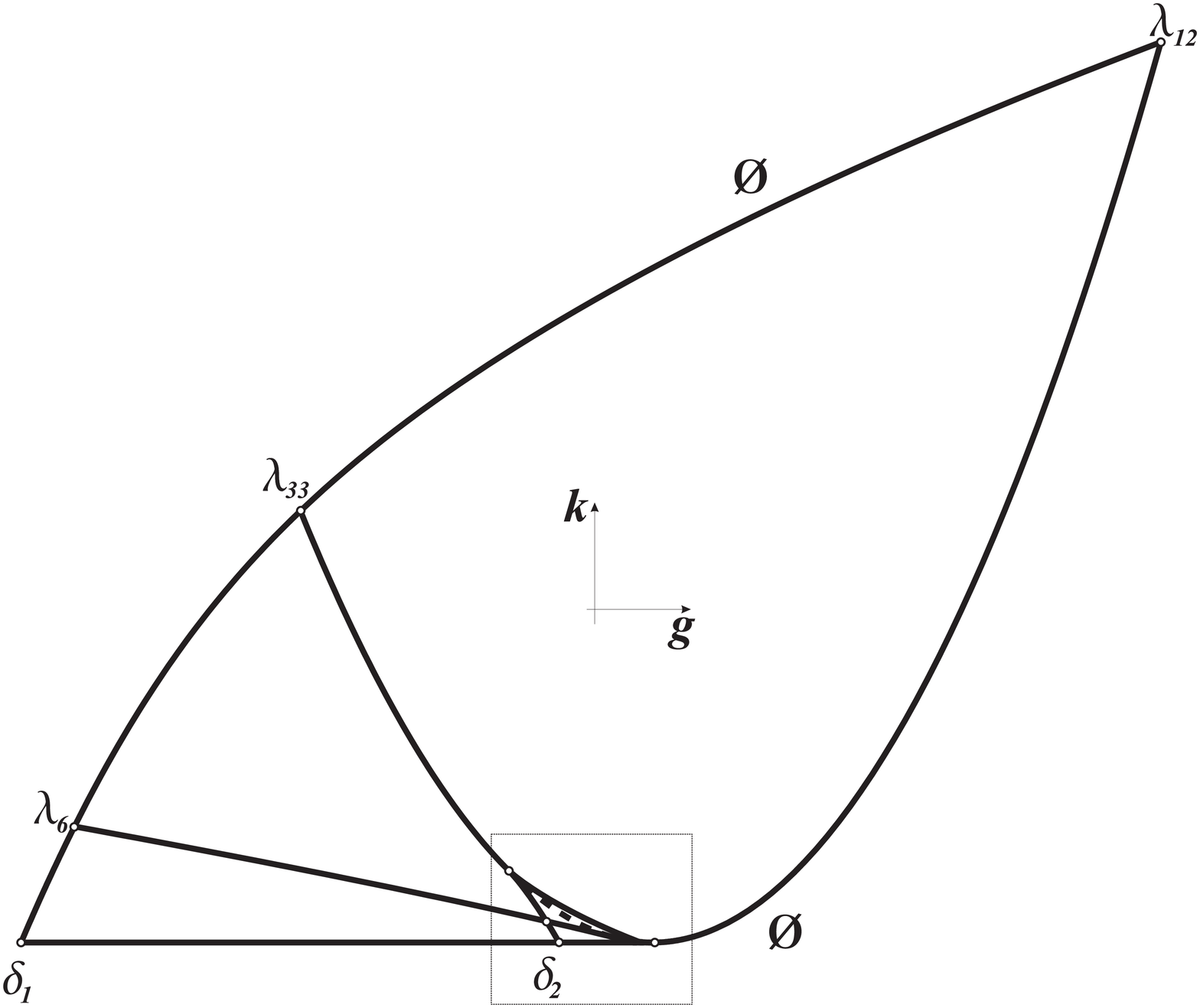}
\includegraphics[width=6cm,keepaspectratio]{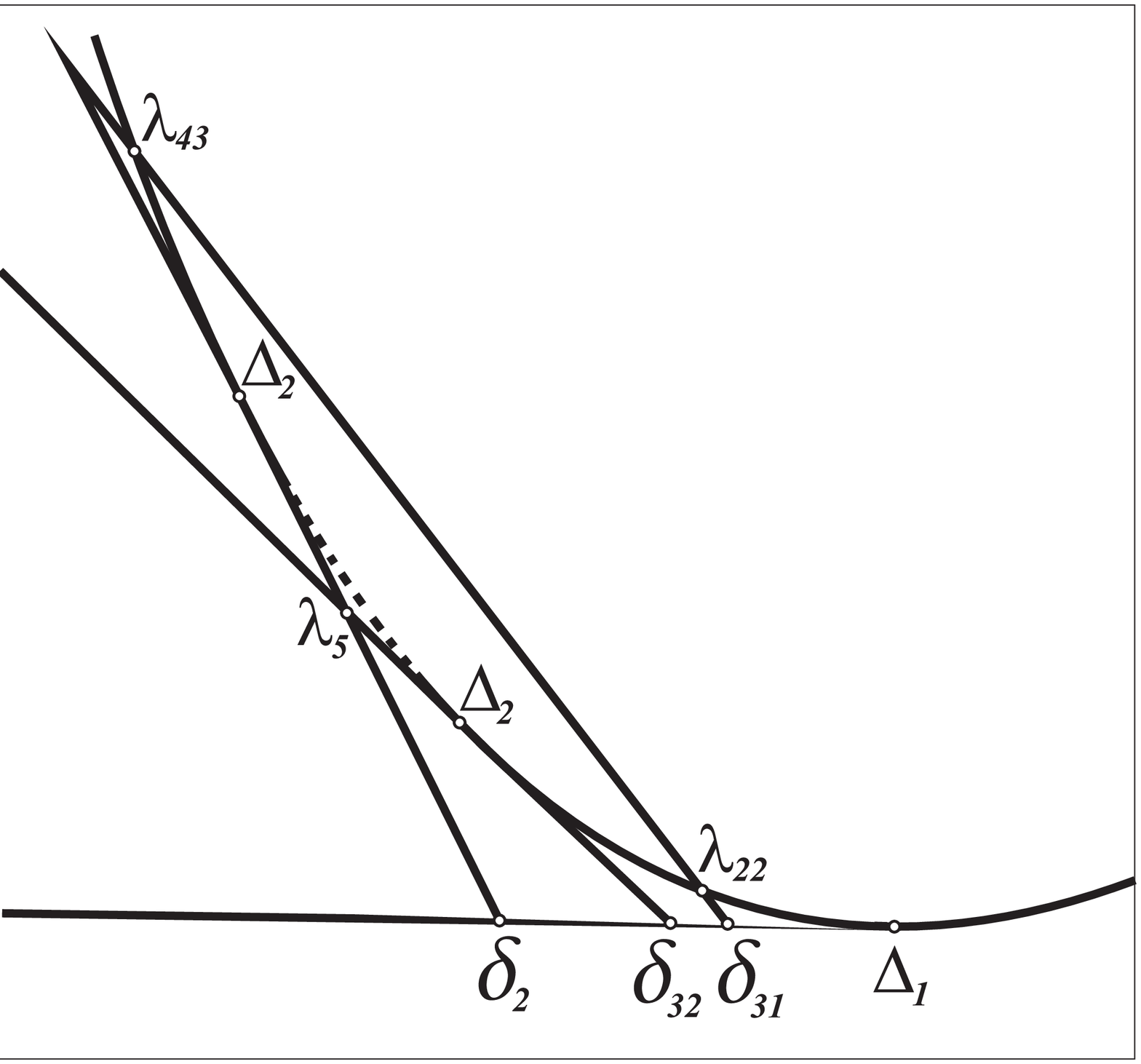}
\parbox[t]{0.9\textwidth}{\caption{Бифуркационная диаграмма
$\Sigma(h_3)$ и ее фрагмент.}\label{fig3}}
\end{figure}

\section{Классификация неподвижных точек}
В работе \cite{KhZot} найден индекс Морса гамильтониана $H$ в
неподвижных точках (\ref{immov}), что в значительной мере определяет
характер поведения системы в их окрестности. Однако строгая
классификация требует указания типа этих точек как критических точек
отображения момента.
\begin{theorem}
Особым точкам $P_k$ $(k=0,\ldots,3)$ бифуркационной диаграммы
$\Sigma$ соответствуют \textit{невырожденные} особенности $c_k$
ранга $0$ отображения момента ${\mF}$. Более точно, точке $P_0$
соответствует особенность $c_0$ типа <<центр-центр-центр>>, $P_1$ --
особенность $c_1$ типа <<центр-центр-седло>>, $P_2$ -- особенность
$c_2$ типа <<центр-седло-седло>>, $P_3$ -- особенность $c_3$ типа
<<седло-седло-седло>>.
\end{theorem}

\begin{proof}
Точки $P_k$ порождены неподвижными точками $c_k$. Касательные
пространства в точках $c_k$ можно описать в виде уравнений
\begin{eqnarray*}
&T_{c_0}\mP^6=T_{c_3}\mP^6=\{{\boldsymbol x}\in {\bR}^9: x_4=x_8=0,b x_5+ax_7=0\},\\
&T_{c_1}\mP^6=T_{c_2}\mP^6=\{{\boldsymbol x}\in {\bR}^9:
x_4=x_8=0,bx_5-ax_7=0\}.
\end{eqnarray*}
Пусть $e^j$ $(j=1,\ldots 6)$ базис соответствующего пространства
$T_{c_k}\mP^6$:
\begin{equation}\label{basis}
\begin{array}{l}
e^1=\{1,0,0,0,0,0,0,0,0\}^t,\\
e^2=\{0,1,0,0,0,0,0,0,0\}^t,\\
e^3=\{0,0,1,0,0,0,0,0,0\}^t,\\
e^4=\{0,0,0,0,a,0,\mp b,0,0\}^t,\\
e^5=\{0,0,0,0,0,1,0,0,0\}^t,\\
e^6=\{0,0,0,0,0,0,0,0,1\}^t.
\end{array}
\end{equation}
Линеаризации векторных полей $\sgrad H, \sgrad K
$ и $\sgrad G$ порождают симплектические линейные операторы
\begin{eqnarray*}
A_H, A_K, A_G\,:T_{c_k}\mP^6\to T_{c_k}\mP^6.
\end{eqnarray*}
Пусть $\ad{x,y,z}$ означает матрицу
$$
\left(
\begin{matrix}
0&0&z\\
0&y&0\\
x&0&0\\
\end{matrix}
\right).
$$
Тогда матрицы операторов $A_H, A_K, A_G$ в выбранном базисе имеют
блочный вид
\begin{equation*}
\begin{array}{lll}
A_H=\left(
\begin{matrix}
\boldsymbol 0&A_1^{c_k}\\
A_2^{c_k}&\boldsymbol 0
\end{matrix}
\right),& A_G=\left(
\begin{matrix}
\boldsymbol 0&A_3^{c_k}\\
A_4^{c_k}&\boldsymbol 0
\end{matrix}
\right),& A_K=\left(
\begin{matrix}
\boldsymbol 0&A_5^{c_k}\\
A_6^{c_k}&\boldsymbol 0
\end{matrix}
\right),
\end{array}
\end{equation*}
где введены обозначения следующих матриц
\begin{equation*}
\begin{array}{ll}
A_1^{c_0}=A_1^{c_3}= \ad{a+b,\ds{-\frac{1}{2}},\ds{\frac{1}{2}}}, &
A_2^{c_0}=-A_2^{c_3}= \ad{-b,a,1},\\[4mm]
A_3^{c_0}=A_3^{c_3}= \ad{a b
(a+b),\ds{-\frac{b^2}{2}},\ds{\frac{a^2}{2}}},
& A_4^{c_0}=-A_4^{c_3}=ab\ad {-a,b,1},\\[4mm]
A_5^{c_0}=-A_5^{c_3}=(a-b) \ad{0,1,1}, &
A_6^{c_0}=A_6^{c_3}=-2(a-b)\ad{b,a,0},\\[4mm]
A_1^{c_1}=A_1^{c_2}= \ad{a-b,\ds{-\frac{1}{2}},\ds{\frac{1}{2}}} , &
A_2^{c_1}=-A_2^{c_2}=\ad{b,a,-1}, \\[4mm]
A_3^{c_1}=A_3^{c_2}= \ad{-a b
(a-b),\ds{-\frac{b^2}{2}},\ds{\frac{a^2}{2}}} , &
A_4^{c_1}=-A_4^{c_2}=ab\ad{a,b,1}
,\\[4mm]
A_5^{c_1}=-A_5^{c_2}=(a+b)\ad{0,1,1}, &
A_6^{c_1}=A_6^{c_2}=-2(a+b)\ad{-b,a,0}.
\end{array}
\end{equation*}
В силу (\ref{aneb}) подалгебра $\mA (c_k,{\mF })$, порожденная
операторами $A_H, A_G, A_K$, является картановской для всех
$k=0,\ldots,3$ и ее размерность равна $3$. В качества регулярного
элемента можно взять $A_H$. Характеристические уравнения имеют вид:
\begin{equation*}
\begin{array}{ll}
c_0: & (\mu^2+a+b)(2\mu^2+a)(2\mu^2+b)=0,\\
c_1: & (\mu^2+a-b)(2\mu^2+a)(2\mu^2-b)=0,\\
c_2: & (\mu^2-a+b)(2\mu^2-a)(2\mu^2+b)=0,\\
c_3: & (\mu^2-a-b)(2\mu^2-a)(2\mu^2-b)=0.
\end{array}
\end{equation*}
Соответствующие собственные значения различны и в каждой точке
разбиваются на три пары:
\begin{equation*}
\begin{array}{llll}
c_0: & \pm \ri \sqrt{a+b}; & \pm \ri  \sqrt{\ds{\frac{a}{2}}};  &
\pm
\ri \sqrt{\ds{\frac{b}{2}}}, \\
c_1: &\pm \ri \sqrt{a+b}; &  \pm \ri \sqrt{\ds{\frac{a}{2}}};  & \pm
\sqrt{\ds{\frac{b}{2}}},\\
c_2:&\pm \sqrt{a-b}; &  \pm \sqrt{\ds{\frac{a}{2}}};  &  \pm
\ri \sqrt{\ds{\frac{b}{2}}},\\
c_3:&\pm \sqrt{a+b}; &  \pm \sqrt{\ds{\frac{a}{2}}};  &  \pm
\sqrt{\ds{\frac{b}{2}}}.
\end{array}
\end{equation*}
Таким образом, точки $c_k$ -- невырожденные особенности и их тип
определяется тройкой целых чисел: $(3,0,0)$ для $c_0$;
$(2,1,0)$ для $c_1$; $(1,2,0)$ для $c_2$ и $(0,3,0)$ для
$c_3$.
\end{proof}

Заметим, что в точках $c_k$ встречаются три локальных критических
подсистемы -- подсистема $\mn$ и две части подсистемы $\mo$, которая
имеет в этих точках особенность типа самопересечения. В частности,
на плоскости $(s,\tau)$ каждая такая точка изображается двумя. В
этом смысле каждая особая точка $Q_{kj}$ диаграммы $\Sigma_3^*$
имеет свой тип (тип точки $c_k$ по отношению к некоторому выбранному
гладкому участку $\mo$ в ее окрестности). Соответствующее описание
критических точек ранга 0 в $\mP^6$, а также в  критических
подсистемах $\mn$ и $\mo$ сведено в табл.~\ref{tab41}.
{\renewcommand{\arraystretch}{1.5} \setlength{\extrarowheight}{-2pt}
\begin{table}[!htbp]
\centering
\begin{tabular}{|c|c|c|c|c|c|c|}
\multicolumn{7}{r}{\fts{Таблица
\myt\label{tab41}}}\\
\hline $\mK^0$ &
\begin{tabular}{c}
Тип в \\
 $\mP^6$
 \end{tabular}&
\begin{tabular}{c}
Образ в \\
 ${\bR}^3(h,k,g)$
 \end{tabular}&
  \begin{tabular}{c}
Образ в \\
 ${\bR}^2(m,h)$
 \end{tabular}  &
Тип в
 $\mn$
 & \begin{tabular}{c}
Образ в \\
 ${\bR}^2(s,h)$
 \end{tabular} & Тип в
 $\mo$ \\
\hline $c_0$ &$(3,0,0)$ &$P_0$ &$P_{01}$&центр-центр & \begin{tabular}{l}$Q_{01}$\\$Q_{02}$\end{tabular}&
\begin{tabular}{l}
центр-центр\\
центр-центр
\end{tabular}
\\
\hline $c_1$ &$(2,1,0)$ &$P_1$ &$P_{11}$&центр-седло &
\begin{tabular}{l}
$Q_{11}$\\
$Q_{12}$
\end{tabular}&
\begin{tabular}{l}
центр-седло\\
центр-центр
\end{tabular}
\\
\hline $c_2$ &$(1,2,0)$ &$P_2$ &$P_{21}$&центр-седло &
\begin{tabular}{l}
$Q_{21}$\\
$Q_{22}$
\end{tabular}&
\begin{tabular}{l}
седло-седло\\
центр-седло
\end{tabular}
\\
\hline $c_3$ &$(0,3,0)$ &$P_3$ &$P_{31}$&седло-седло &
\begin{tabular}{l}
$Q_{31}$\\
$Q_{32}$
\end{tabular}&
\begin{tabular}{l}
седло-седло\\
седло-седло
\end{tabular}
\\
\hline
\end{tabular}\,
\end{table}
}
\section{Невырожденные особенности ранга $1$}
Как было показано выше, невырожденные особенности ранга 1 состоят из
периодических решений семейств $\mL_k$ $(k=1,\ldots,9)$, образ
которых составляет одномерный комплекс
$\Sigma^1=\{\lambda_k,\delta_k\}$, в котором кривые разбиты на
участки точками комплекса $\Sigma^0$ (трансверсальное пересечение
одномерных кривых, бифуркации периодических решений при прохождении
через неподвижную точку), точками касания кривых между собой или с
кривыми $\Delta_i$ (вырожденные особенности).
\begin{theorem}
Точкам одномерного комплекса бифуркационной диаграммы $\Sigma$
соответствуют \textit{невырожденные\/} особенности $\{{\mL}_k\}$
ранга $1$ отображения момента
$$
{{\mF}=(H,K,G): \mP^6\to {\bR}^3},
$$
за исключением следующих значений энергии, при которых происходит
вырождение: на $\mL_2$ $h = 2a , \ds{\frac{3a^2+b^2}{2a}}$; на
$\mL_3$ $h = -2b $; на $\mL_4$ $h = 2b , \ds{\frac{a^2+3b^2}{2b}}$;
на $\mL_5$ $h =\pm 2\sqrt{a b}$. Вырожденным особенностям на
бифуркационных диаграммах отвечают точки $q_1 - q_3$, $e_1 - e_5$. В
зависимости от значений параметров тип невырожденных особенностей в
$\mP^6$ определяется таблицей~\ref{tab42}.
\end{theorem}

{\renewcommand{\arraystretch}{1.5} \setlength{\extrarowheight}{-2pt}
\begin{table}[!htbp]
\centering
\begin{tabular}{|c|l|l|}
\multicolumn{3}{r}{\fts{Таблица
\myt\label{tab42}}}\\
\hline $\mK^1$& \multicolumn{1}{c|}{Образ в
 ${\bR}^3(h,k,g)$}&\multicolumn{1}{c|}{Тип в
 $\mP^6$}\\
\hline ${\mL}_1$ &\begin{tabular}{l}
$\lambda_1=\lambda_{11}\cup\lambda_{12},$\\
$\lambda_{11}:-(a+b)<h<-(a-b),$\\
$\lambda_{12}:h>-(a-b),$
\end{tabular}
&\begin{tabular}{l} $(2,0,0)$
(центр-центр ранга 1)\end{tabular}\\

\hline ${\mL}_2$ &\begin{tabular}{l}
$\lambda_2=\lambda_{21}\cup\lambda_{22}\cup\lambda_{23},$\\
$\lambda_{21}:a-b<h<a+b$,\\
$\lambda_{22}:a+b<h<2a$,\, $\ds{h\ne\frac{3a^2+b^2}{2a}}$,\\
$\lambda_{23}: h>2a$
\end{tabular}&\begin{tabular}{l}
$\lambda_{21},\lambda_{22}: (0,2,0)$
(седло-седло ранга 1)\\
$\lambda_{23}: (1,1,0)$ (центр-седло ранга 1)
\end{tabular} \\

\hline ${\mL}_3$ &
\begin{tabular}{l}
$\lambda_3=\lambda_{31}\cup\lambda_{32}\cup\lambda_{33},$\\
$\lambda_{31}:-(a+b)<h<-2b$,\\
$\lambda_{32}:-2b<h<a-b$,\\
$\lambda_{33}: h>a-b$
\end{tabular}&\begin{tabular}{l}
$\lambda_{31}: (2,0,0)$ (центр-центр ранга 1)\\
$\lambda_{32},\lambda_{33}: (1,1,0)$ (центр-седло ранга 1)
\end{tabular}
\\

\hline ${\mL}_4$ &
\begin{tabular}{l}
$\lambda_4=\lambda_{41}\cup\lambda_{42}\cup\lambda_{43},$\\
$\lambda_{41}:-(a-b)<h<2b$,\\
$\lambda_{42}:2b<h<a+b$,\\
$\lambda_{43}: h>a+b$,\, $\ds{h\ne\frac{a^2+3b^2}{2b}}$
\end{tabular}&\begin{tabular}{l}
$\lambda_{41}:(1,1,0)$ (центр-седло ранга 1)\\
$\lambda_{42},\lambda_{43}: (0,2,0)$ (седло-седло ранга 1)
\end{tabular}\\

\hline ${\mL}_5$ &
\begin{tabular}{l}
$\lambda_5=\lambda_{50}\cup\left(\bigcup_{k=1}^4\,\lambda_{5k}\right),$\\
$\lambda_{51}:-(a+b)<h<-2\sqrt{ab}$\\
$\lambda_{50},-2\sqrt{ab}<h<2\sqrt{ab}$\\
$\lambda_{52}, \lambda_{53},\lambda_{54}:h>2\sqrt{ab}$
\end{tabular}&\begin{tabular}{l}
$\lambda_{51}:(2,0,0)$ (центр-центр ранга 1)\\
$\lambda_{50}:(0,0,1)$ (фокусная особенность ранга $1$)\\
$\lambda_{52}, \lambda_{53},\lambda_{54}: (0,2,0)$ (седло-седло ранга 1)
\end{tabular}
\\

\hline  ${\mL}_6$ & \begin{tabular}{l}$\lambda_6:h>-(a-b)$
\end{tabular}&\begin{tabular}{l} $(1,1,0)$
(центр-седло ранга 1)\end{tabular}\\

\hline ${\mL}_7$ & \begin{tabular}{l}$\delta_1:s\in(-b,0)$
\end{tabular}&\begin{tabular}{l} $(2,0,0)$
(центр-центр ранга 1)\end{tabular}\\

\hline ${\mL}_8$ & \begin{tabular}{l}$\delta_2:s\in(0,b)$
\end{tabular}&\begin{tabular}{l} $(1,1,0)$
(центр-седло ранга 1)\end{tabular}\\

\hline ${\mL}_9$ &\begin{tabular}{l}
$\delta_3=\delta_{31}\cup\delta_{32}$\\
$\delta_{31}: s\in(a,s_0)$\\
$\delta_{32}: s\in(s_0,+\infty)$
\end{tabular}
&\begin{tabular}{l} $\delta_{31}:(1,1,0)$
(центр-седло ранга 1)\\
$\delta_{32}: (2,0,0)$ (центр-центр ранга 1)
\end{tabular}\\
\hline
\end{tabular}\,
\end{table}
}
\begin{proof}
Рассмотрим произвольную точку $x_0\in{\mL}_k$. Определим следующие
функции
\begin{equation}\notag
\begin{array}{lll}
({\mL}_{1,2})& g_1=K-2(h\pm 2a)H, & g_2=G-a^2H,\\[4mm]
({\mL}_{3,4})& g_1=K-2(h\pm 2b)H, & g_2=G-b^2H,\\[4mm]
({\mL}_{5,6})& g_1=\pm abH-G, & g_2=K,\\[4mm]
({\mL}_{7,8,9})&g_1=2G-(p^2-\tau)H, & g_2=K.
\end{array}
\end{equation}
Положим также $g_3=H$. Поскольку все неподвижные точки имеют ранг 0
и уже исключены, то $dg_3(x_0) \ne 0$. Выбранные функции $g_1$,
$g_2$ в точке $x_0$ имеют особенность: $dg_1(x_0)=dg_2(x_0)=0$.
Линеаризации векторных полей $\sgrad g_k$ $(k=1,2)$ в точке $x_0$
дают линейные симплектические операторы $A_{g_k}: T_{x_0}\mP^6 \to
T_{x_0}\mP^6$ $(k=1,2)$. Непосредственно проверяется, что они
линейно независимы, то есть порождают подалгебру в $\sp(6,{\bR})$
размерности 2. Характеристическое уравнение для оператора $A_{g_1}$
имеет два нулевых корня: $\ker A_{g_1}=T_{x_0}\mL_k$. Остальная
часть характеристического многочлена имеет вид
\begin{equation}\label{char}
\begin{array}{ll}
{\mL}_{1,2}:&[\mu^2+4(a^2-b^2)(h\pm 2a)]\cdot[\mu^2\pm 8a(h\pm
2a)^2]=0,\\[4mm]
{\mL}_{3,4}:&[\mu^2-4(a^2-b^2)(h\pm 2b)]\cdot[\mu^2\pm 8b(h\pm
2b)^2]=0,\\[4mm]
{\mL}_{5,6}: & 4\mu^4\mp 2abh(a\mp
b)^2\mu^2\pm b^3a^3(a\mp b)^4=0,\\[4mm]
{\mL}_{7,8,9}: &\mu^4+u_{k}\mu^2+v_k=0 \qquad (k=1,2,3),
\end{array}
\end{equation}
где коэффициенты $u_k, v_k$ определяются по формулам
\begin{equation}\notag
\begin{array}{lcl}
u_{1,2}&=& - \ds{\frac{2}{s}}(\sqrt{\mathstrut {{a^2-s^2}}}\pm
\sqrt{\mathstrut {{b^2-s^2}}})^2 \Bigl[a^2b^2-s^4 \pm
6s^2\sqrt{\mathstrut{{(a^2-s^2)(b^2-s^2)}}}\Bigr],\\[4mm]
v_{1,2}&= &\pm
\ds{16\sqrt{\mathstrut{{(a^2-s^2)(b^2-s^2)}}}(\sqrt{\mathstrut
{{a^2-s^2}}}\pm \sqrt{\mathstrut {{b^2-s^2}}})^4
}\cdot V_{\pm}, \\[4mm]
u_3&=& \phantom{-} \ds{\frac{2}{s}}(\sqrt{\mathstrut
s^2-a^2}-\sqrt{\mathstrut s^2-b^2})^2 \Bigl[a^2b^2-s^4+
6s^2\sqrt{\mathstrut(s^2-a^2)(s^2-b^2)}\Bigr],\\[4mm]
v_3&=&\phantom{\pm}\ds{16\sqrt{\mathstrut(s^2-a^2)(s^2-b^2)}(\sqrt{\mathstrut
s^2-a^2}-\sqrt{\mathstrut s^2-b^2})^4
}\cdot V_{+},
\end{array}
\end{equation}
где
\begin{eqnarray*}
&&V_{\pm}=a^2b^2-s^4\pm 2s^2\sqrt{\mathstrut {{(a^2-s^2)(b^2-s^2)}}}.
\end{eqnarray*}

Нахождение многочленов в точках $\mL_1 - \mL_6$ труда не составляет.
Для множеств {$\mL_7 - \mL_9$}  при вычислении коэффициентов
$u_k,v_k$ характеристических многочленов использовалась запись
матрицы линеаризации соответствующего поля в базисе касательного
пространства к $\mP^6$, который получим из (\ref{basis}) заменой
последней тройки на векторы
\begin{equation*}
\begin{array}{l}
\{0,0,0, 0,-\alpha_3,\alpha_2, 0,-\beta_3,\beta_2\}^t,\\[4mm]
 \{0,0,0,\alpha_3,0,-\alpha_1, \beta_3,0,-\beta_1\}^t,\\[4mm]
 \{0,0,0,-\alpha_2,\alpha_1,0, -\beta_2,\beta_1,0\}^t
\end{array}
\end{equation*}
в подстановке выражений (\ref{equa:delone}).

При значениях параметров, указанных в табл.~\ref{tab42}, все корни
соответствующего характеристического уравнения различны и
разбиваются на пары, определяющие тип невырожденной особенности.

Для примера рассмотрим кривую $\delta_3$ ($s>a$). Характеристическое
уравнение в точках $\mL_9$ относительно $\mu^2$ имеет корни
\begin{equation*}
\begin{array}{l}
\mu^2_{(1)}= - 8s \sqrt{\mathstrut s^2-a^2}\sqrt{\mathstrut s^2-b^2}
\bigl(\sqrt{\mathstrut s^2-a^2}-\sqrt{\mathstrut s^2-b^2}\bigr)^2 <
0,\\
\mu^2_{(2)}=\ds{\frac{2}{s}}\bigl(\sqrt{\mathstrut
s^2-a^2}-\sqrt{\mathstrut s^2-b^2}\bigr)^2\bigl(s^4-a^2 b^2-2
s^2\sqrt{\mathstrut s^2-a^2}\sqrt{\mathstrut s^2-b^2}\bigr).
\end{array}
\end{equation*}
Последнее выражение меняет знак при переходе через значение $s_0$
согласно уравнению (\ref{equa:s0}): $\mu^2_{(2)}>0$ при $s\in
(a,s_0)$ и $\mu^2_{(2)}< 0$ при $s > s_0$. В первом случае получаем
особенность <<центр-седло>>, во втором -- <<центр-центр>>. По
отношению к критической подсистеме $\mo$ все точки периодических
решений $\mL_7 - \mL_9$ имеют эллиптический тип, так как
соответствующие кривые являются внешней границей области
существования движений. Поэтому в критической подсистеме $\mm$
тип рассматриваемых точек гиперболический при $s\in (a,s_0)$ и
эллиптический при $s > s_0$. Теперь эти утверждения относительно
обеих подсистем доказаны аналитически.

Отметим также, что согласно табл.~\ref{tab42}
получено аналитическое доказательство
существования  невырожденной фокусной
особенности ранга $1$ на части множества
$\mL_5$, которая отображается в $\lambda_{50}$, то есть при значениях энергии
${-2\sqrt{ab}<h<2\sqrt{ab}}$. В
изоэнергетических сечениях $\Sigma(h)$ имеем
изолированную точку на бифуркационной
диаграмме.

Все остальные случаи, отраженные в таблице, также вытекают из
анализа корней характеристических многочленов~(\ref{char}).
Вырождения особенностей соответствуют наличию кратного корня.
\end{proof}

Теперь можно \textit{однозначно} установить
топологию слоения Лиувилля в окрестности
невырожденных положений равновесия
\textit{внутри} критических подсистем $\mn$ и
$\mo$ (напомним, что в $\mm$ они не попадают) в
виде прямого или почти прямого произведения
2-атомов, если воспользоваться, например,
методом круговых молекул \cite{BolFom}, \cite{BolFomRich}. Однако для
его применения необходима информация о
количестве двумерных торов в каждой области,
регулярной для подсистемы, и знание типа
невырожденных особенностей ранга $0$ и $1$.
Количество торов, равно как и количество
периодических решений на каждом совместном
уровне выбранной пары частных интегралов,
установлено в \cite{KhSav}, \cite{Kh2009}. Тип
особенностей установлен выше. Для примера,
окрестность особого слоя невырожденной
особенности $c_3$ \textit{внутри} $\mn$ можно
представить в виде прямого произведения
$B{\times}B$ двух атомов типа $B$,
\textit{внутри} $\mo$ на одном гладком
четырехмерном листе критического многообразия
-- $B{\times B}$, а на другом -- почти прямое
произведение $(B{\times}C_2)/\bZ_2$, где группа
${\bZ}_2$ действует на каждом из сомножителей
как центральная симметрия. Этого достаточно для
того, чтобы понять, как устроена окрестность
${\cal U}(\LS)$ особого слоя ${\LS}$ в $\mP^6$,
содержащего особую точку $c_3$ типа
<<седло-седло-седло>> ранга 0. Нужно взять
модельную особенность типа прямого произведения
трех 2-атомов $B{\times}B{\times}C_2$ и
рассмотреть на нем покомпонентное действие
образующей $e$ группы ${\bZ}_2$ по правилу
$e(B{\times}B{\times}C_2)=(\beta(B){\times}\mathop{\rm
id}\nolimits (B){\times}\alpha(C_2))$, где
$\alpha$ и $\beta$ -- центральные симметрии
симметричных атомов $B$ и $C_2$. После
факторизации окрестность ${\cal U}(\LS)$
получается в виде почти прямого произведения
$(B{\times}B{\times}C_2)/{\bZ}_2$. Таким
образом, из 32 разных особенностей типа
<<седло-седло-седло ранга 0>>, описанных в
 \cite{Kal}, \cite{Osh1}, \cite{Osh2}, для волчка
Ковалевской в двойном поле сил реализуется
только одна, указанная выше. В
таблице~\ref{tab43} содержится информация об
окрестностях особых слоев в $\mn$, $\mo$ и
$\mP^6$, содержащих неподвижные точки системы.
{\renewcommand{\arraystretch}{1.5} \setlength{\extrarowheight}{-2pt}
\begin{table}[!htbp]
\centering
\begin{tabular}{|c|c|c|c|c|c|c|}
\multicolumn{7}{r}{\fts{Таблица
\myt\label{tab43}}}\\
\hline $\mK^0$ &
\begin{tabular}{c}
Образ в \\
 ${\bR}^2(m,h)$
 \end{tabular}&
 \begin{tabular}{c}
${\cal U}(\LS)$\\ в
 $\mn$
 \end{tabular}&
  \begin{tabular}{c}
Образ в \\
 ${\bR}^2(s,h)$
 \end{tabular}  &
 \begin{tabular}{c}
${\cal U}(\LS)$\\ в
 $\mo$
 \end{tabular}
 & \begin{tabular}{c}
Образ в \\
 ${\bR}^3(h,k,g)$
 \end{tabular} & \begin{tabular}{c}
${\cal U}(\LS)$\\ в
 $\mP^6$
 \end{tabular} \\
\hline $c_0$ &$P_{01}$ &$A{\times} A$ &\begin{tabular}{c}
$Q_{01}$\\
$Q_{02}$
\end{tabular}&\begin{tabular}{c}
$A{\times} A$\\
$A{\times} A$
\end{tabular}&
$P_0$ & $A{\times} A{\times} A$
\\
\hline $c_1$ &$P_{11}$ &$A{\times} B$ &\begin{tabular}{c}
$Q_{11}$\\
$Q_{12}$
\end{tabular}&\begin{tabular}{c}
$A{\times} B$\\
$A{\times} A$
\end{tabular}&
$P_1$& $A{\times} A{\times} B$
\\
\hline $c_2$ &$P_{21}$ &$A{\times} B$ &\begin{tabular}{c}
$Q_{21}$\\
$Q_{22}$
\end{tabular}&\begin{tabular}{c}
$(B{\times} C_2)/{\bZ}_2$\\
$A{\times} B$
\end{tabular}&
$P_2$& $(A{\times} B{\times} C_2)/{\bZ}_2$
\\
\hline $c_3$ &$P_{31}$ &$B{\times} B$ &\begin{tabular}{c}
$Q_{31}$\\
$Q_{32}$
\end{tabular}&\begin{tabular}{c}
$(B{\times} C_2)/{\bZ}_2$\\
$B{\times} B$
\end{tabular}
& $P_3$&$(B{\times} B{\times} C_2)/{\bZ}_2$
\\
\hline
\end{tabular}\,
\end{table}
}

\section{Невырожденные особенности ранга $2$}\label{s5}
Множество $\mK^2$ всех критических точек ранга $2$ есть объединение
трех систем $\mm$, $\mn$ и $\mo$ за вычетом уже исследованных точек
множества $\mK^0 \cup \mK^1$.

Отметим сразу, что для системы $\mm$, заданной
уравнениями (\ref{eq1.4}), (\ref{eq1.7}),
каждый регулярный тор ${\mathbb{T}}^2\in \{x\in
\mm : H=h,F=f\}$ является
\textit{эллиптическим}. Это вытекает из того,
что интеграл $K=Z_1^2+Z_2^2$ есть положительная
всюду функция и обращается в ноль на $\mm$.

\begin{theorem} Все критические точки ранга 2 на многообразии $\mm$
являются невырожденными типа $(1,0,0)$ за исключением точек
нулевого уровня интеграла $F$.
\end{theorem}
\begin{proof}
Напомним, что на $\mm$ нет критических точек интеграла $H$, и
отметим, что кроме множеств $\mL_7 - \mL_9$ на $\mm$ нет точек
зависимости интегралов $H$ и $G$. Последнее доказано в
\cite{Kh2005}. Таким образом, эти два интеграла регулярны и
независимы на $\mm \cap \mK^2$. В то же время всюду на $\mm$ имеем
$dK=0$. Характеристическое уравнение оператора $A_K$ в $\bR^9$ при
условии (\ref{eq1.7}) легко выписывается, имеет семь нулевых корней,
а оставшийся сомножитель $\mu^2+4 f^2$ имеет два различных мнимых
корня  при $f\ne 0$, что и доказывает теорему.
\end{proof}

В образе получаем множество $\Delta_1 \subset \bR^2(h,k,g)$, для
которого тем самым получено обоснование вырожденности
соответствующих критических точек.

На многообразии $\mn$, заданном уравнениями (\ref{eq1.4}),
(\ref{eq1.8}), система (\ref{eq1.1}) имеет явное алгебраическое
решение \cite{KhSav}:
\begin{equation}\label{ne4_18}
\begin{array}{l}
\displaystyle{\alpha _1  = \frac{\mathstrut 1} {{2(s_1  - s_2 )^2
}}[(s_1 s_2
- a^2 )\psi + S_1 S_2 \varphi _1 \varphi _2 ], }\\
\displaystyle{\alpha _2  = \frac{\mathstrut 1} {{2(s_1  - s_2 )^2
}}[(s_1 s_2
- a^2)\varphi _1 \varphi _2  - S_1 S_2 \psi ], }\\
\displaystyle{\beta _1  =  - \frac{\mathstrut 1} {{2(s_1  - s_2 )^2
}}[(s_1
s_2  - b^2)\varphi _1 \varphi _2  - S_1 S_2 \psi ], }\\
\displaystyle{\beta _2  = \frac{\mathstrut 1} {{2(s_1  - s_2 )^2
}}[(s_1 s_2
- b^2 )\psi + S_1 S_2 \varphi _1 \varphi _2 ], }\\
\displaystyle{\alpha _3  = \frac{\mathstrut r} {{s_1  - s_2 }}S_1
,\quad
\beta _3  =\frac{r} {{s_1  - s_2 }}S_2, }\\
\displaystyle{\omega _1  = \frac{\mathstrut r} {{2(s_1  - s_2
)}}(\ell - 2ms_1 )\varphi _2,\quad \omega _2  = \frac{r} {{2(s_1 -
s_2
)}}(\ell  - 2ms_2)\varphi _1, }\\
\displaystyle{\omega _3  = \frac{\mathstrut 1} {{s_1  - s_2 }}(S_2
\varphi _1 - S_1 \varphi _2 )}.
\end{array}
\end{equation}
Здесь обозначено
\begin{equation}\label{ne4_19}
\begin{array}{l}
\displaystyle{\psi = 4ms_1 s_2  - 2\ell (s_1  + s_2 ) +
\frac{1}{m}(\ell ^2  - 1)},  \\
\displaystyle{S_1 = \sqrt {s_1^2 - a^2},}\quad
\displaystyle{\varphi_1 = \sqrt {\mstrut - \varphi (s_1)},}\quad
\displaystyle{S_2 = \sqrt {b^2 - s_2^2},}\quad
\displaystyle{\varphi_2 = \sqrt {\mstrut\varphi (s_2)},}\\
\displaystyle{\varphi (s) = 4ms^2  - 4\ell s + \frac{1} {m}(\ell ^2
- 1)},
\end{array}
\end{equation}
и постоянная $\ell$ связана с $h,m$ уравнением (\ref{lhm}).
Зависимость $s_1 ,s_2 $ от времени задана уравнениями
\begin{equation}\label{dif2}
\frac{ds_1}{dt}=\frac{1}{2}\sqrt{(a^2-s_1^2)\mstrut\varphi
(s_1)},\quad
\frac{ds_2}{dt}=\frac{1}{2}\sqrt{(b^2-s_2^2)\mstrut\varphi (s_2)}.
\end{equation}

\begin{theorem} Все критические точки ранга 2 на многообразии $\mn$ невырождены,
за исключением точек, лежащих в прообразе кривых $\Delta_1,
\Delta_2$. При этом они имеют тип $(1,0,0)$ для $m>0$ и
$(0,1,0)$ для $m<0$.
\end{theorem}
\begin{proof}
В качестве единственного интеграла, имеющего особенность в каждой
точке $\mn \cap \mK^2$, благодаря наличию достаточно простого
уравнения (\ref{eqpov2}) листа $\pov_2$, удобно взять
соответствующую функцию (\ref{eqpov}):
\begin{equation}\label{fi2}
\Phi = \phi_2(H,K,G)=(2G-p^2H)^2-r^4 K.
\end{equation}
Для нее характеристическое уравнение оператора $A_\Phi$ после
необходимой факторизации по нулевому корневому подпространству в
подстановке явных выражений (\ref{ne4_18}) примет вид
\begin{equation*}\label{eq5.3}
\mu^2+4r^{12} \ell^2 m=0,
\end{equation*}
и за исключением случаев $m=0$ ($\Delta_1$) и $\ell=0$ ($\Delta_2$)
имеет два различных корня, чисто мнимых при $m>0$ и вещественных при
$m<0$. Теорема доказана.
\end{proof}

\begin{remark}
Как показано в \cite{KhSav}, $\mn$ состоит из
критических точек нулевого уровня функций $2G
-p^2H \pm r^2 \sqrt{K}$, одна из которых
является суммой квадратов двух гладких
регулярных функций ($m>0$), а другая --
разностью квадратов ($m<0, \ell \ne 0$). Отсюда
следует эллиптичность двумерных торов в первом
случае, и гиперболичность во втором. Здесь мы
вдобавок строго доказали невырожденность таких
точек. Интересно отметить связь вырожденности
критических точек с аналитическим решением. При
$m \to 0$ многочлен $\varphi$ в (\ref{ne4_19})
имеет предел $\varphi(s) \to 2h -4 s$, поэтому
при $m=0$ падает степень подкоренного выражения
в дифференциальных уравнениях (\ref{dif2}). При
$\ell=0$ переменные $s_1,s_2$ входят в правые
части этих уравнений только в четных степенях,
в связи с чем возникает дополнительная
симметрия.
\end{remark}

На многообразии $\mo$, заданном уравнениями
(\ref{eq1.4}), (\ref{eq1.10}), явное
алгебраическое решение системы (\ref{eq1.1})
указано в \cite{Kh2009,Kh2007}. Пусть $s,\tau$,
как и выше, -- постоянные интегралов
(\ref{eq1.11}), (\ref{intT}), последняя связана
с $h$ линейной зависимостью (\ref{ne1_38}).
Введем также обозначения
\begin{equation*}\label{ne4_38}
\begin{array}{l}
\sigma = \tau^2-2p^2 \tau+r^4, \quad
\chi=\sqrt{\ds{\frac{4s^2\tau+\sigma}{4s^2}}}, \quad
\varkappa=\sqrt{\sigma}.
\end{array}
\end{equation*}
Тогда на любом совместном уровне интегралов внутри
$\mo$ имеем
\begin{eqnarray}
& & \begin{array}{l} \displaystyle{\alpha_1=\frac{(\mathcal{A}-r^2
U_1 U_2)(4 s^2 \tau+U_1 U_2)-(\tau+r^2) M_1 N_1 M_2 N_2 V_1 V_2}{4
r^2 s\, \tau (U_1+U_2)^2},
} \\[3mm]
\displaystyle{\alpha_2=\ri \frac{(\mathcal{A} -r^2 U_1 U_2)V_1 V_2
-(4 s^2 \tau+U_1 U_2)(\tau+r^2) M_1 N_1 M_2 N_2}{4 r^2 s\, \tau
(U_1+U_2)^2},
}\\[3mm]
\displaystyle{\beta_1=\ri \frac{(\mathcal{B}+r^2 U_1 U_2)V_1 V_2-(4
s^2 \tau+U_1 U_2)(\tau-r^2) M_1 N_1 M_2 N_2}{4 r^2 s\, \tau
(U_1+U_2)^2},
} \\[3mm]
\displaystyle{\beta_2=-\frac{(\mathcal{B} + r^2 U_1 U_2)(4 s^2 \tau
+ U_1 U_2)-(\tau-r^2) M_1 N_1 M_2 N_2 V_1 V_2}{4 r^2 s\, \tau
(U_1+U_2)^2},
}\\[3mm]
\displaystyle{\alpha_3= \frac{R }{r \sqrt{\mathstrut 2} } \, \frac
{M_1 M_2}{t_1+t_2},}\quad \displaystyle{\beta_3= - \ri \frac{R}{r
\sqrt{\mathstrut 2} } \, \frac { N_1 N_2}{t_1+t_2},}\label{ne4_39}
\end{array}\\[3mm]
 & & \nonumber\begin{array}{l}\displaystyle{\omega_1=  \frac{R }{4 r
s\, \sqrt{s \,\tau}} \, \frac{M_2 N_1
U_1 V_2 + M_1 N_2 U_2 V_1}{ t_1^2-t_2^2},} \\[3mm]
\displaystyle{\omega_2= -\frac{\ri\,R}{4 r s\, \sqrt{s \,\tau} } \,
\frac{M_2 N_1 U_2 V_1 + M_1 N_2 U_1 V_2}{t_1^2-t_2^2},} \\[3mm]
\displaystyle{\omega_3= \frac{U_1-U_2}{\sqrt{2s\tau}}\frac{M_2 N_2
V_1 - M_1 N_1 V_2}{t_1^2-t_2^2}.}
\end{array}
\end{eqnarray}
Здесь
\begin{equation*}\label{ne4_41}
\begin{array}{ll}
K_1=\sqrt{\mstrut t_1+\varkappa}\,, & K_2=\sqrt{\mstrut t_2+\varkappa}\,,\\
L_1=\sqrt{\mstrut t_1-\varkappa}\,, & L_2=\sqrt{\mstrut t_2-\varkappa}\,,\\
M_1=\sqrt{\mstrut t_1+\tau+r^2}\,, & M_2=\sqrt{\mstrut
t_2+\tau+r^2}\,,
\\[1.5mm]
N_1=\sqrt{\mstrut t_1+\tau-r^2}\,, &
N_2=\sqrt{\mstrut t_2+\tau-r^2}\,,\\
V_1 = \sqrt{\mstrut 4s^2\chi^2-t_1^2}\,, & V_2 = \sqrt{\mstrut
4s^2\chi^2-t_2^2},\\
U_1=\sqrt{\mstrut t_1^2-\sigma}=K_1 L_1, & U_2=\sqrt{\mstrut
t_2^2-\sigma}=K_2 L_2,\\
R=\ds{\frac{1}{\sqrt{\mathstrut 2}}}(K_1 K_2 + L_1 L_2),\\
\end{array}
\end{equation*}
и
\begin{equation}\notag
\begin{array}{l}
\mathcal{A}=[(t_1+\tau+r^2)(t_2+\tau+r^2)-2(p^2+r^2)r^2]\tau,\\
\mathcal{B}=[(t_1+\tau-r^2)(t_2+\tau-r^2)+2(p^2-r^2)r^2]\tau.
\end{array}
\end{equation}
Зависимость вещественных переменных $t_1,t_2$ от времени описывается
уравнениями
\begin{equation*}\label{dif3}
\begin{array}{l}
\ds{(t_1-t_2)\frac{dt_1}{dt}=\sqrt{\frac{1}{2s\tau}(4s^2\chi^2-t_1^2)(t_1^2-\sigma)[r^4-(t_1+\tau)^2]}},\\
\ds{(t_1-t_2)\frac{dt_2}{dt}=\sqrt{\frac{1}{2s\tau}(4s^2\chi^2-t_2^2)(t_2^2-\sigma)[r^4-(t_2+\tau)^2]}}.
\end{array}
\end{equation*}

\begin{theorem}
Все регулярные двумерные торы, заданные формулами $(\ref{ne4_39})$,
состоят из невырожденных критических точек ранга 2 отображения
момента ${\mF}$, за исключением значений параметров, отвечающих
множествам $\Delta_2, \Delta_3$. Тор является
эллиптическим, если значение $\tau s
[s^4-(a^2+b^2-\tau)s^2+a^2 b^2 ]$ отрицательно, и
гиперболическим, если оно положительно.
\end{theorem}
\begin{proof}
В данном случае в качестве интеграла, имеющего особенность на $\mo$
также можно взять функцию, аналогичную (\ref{fi2}), полученную,
например, исключением $s$ из уравнений (\ref{ghar2}) поверхности
$\pov_3$. Однако результат получается слишком громоздким, и такой
подход нерационален. Здесь удобно рассмотреть функцию с
неопределенными множителями Лагранжа, которая введена в
\cite{Kh2005} для вывода уравнений критических подсистем,
\begin{equation*}
\Psi = 2 G- (p^2-\tau) H+s K.
\end{equation*}
Как показано в \cite{Kh2005}, {\it после} записи условия наличия
критической точки у функции $\Psi$ константы $s,\tau$ на $\mo$
оказываются значениям частных интегралов $S,T$. Поэтому и при
вычислении характеристического многочлена оператора $A_\Psi$ считаем
$s,\tau$ константами, а затем подставляем в найденное выражение
значения (\ref{ne4_39}). Получим характеристическое уравнение в виде
\begin{eqnarray*}\label{char:O4}
&&\mu^2-\frac{2\tau}{s}\Bigl[s^4-(p^2-\tau)s^2+\frac{p^4-r^4}{4}\Bigr]=0.
\end{eqnarray*}
Отсюда следует утверждение теоремы.
\end{proof}

\section{Заключение}
В работе выполнена полная классификация особых точек отображения
момента неприводимой интегрируемой гамильтоновой системы с тремя
степенями свободы -- задачи о движении волчка типа Ковалевской в
двойном силовом поле, интегрируемость которой установлена
А.Г.\,Рейманом и М.А.\,Семеновым\-Тян-Шанским. Предъявлены явные
формулы характеристических уравнений для собственных чисел
соответствующих симплектических операторов, которые и определяют тип
невырожденной особенности. Для вывода характеристических уравнений
используется параметризация периодических решений и двумерных торов
в критических подсистемах, полученная в \cite{KhSav}, \cite{Kh2009}
как составная часть разделения переменных и построения
алгебраического решения. Полученные результаты составляют
аналитическую основу для описания топологии особых слоев слоения
Лиувилля рассматриваемой системы.
\end{fulltext}


\end{document}